\g@addto@macro\normalsize{%
  \setlength\abovedisplayskip{8pt}
  \setlength\belowdisplayskip{8pt}
  \setlength\abovedisplayshortskip{8pt}
  \setlength\belowdisplayshortskip{8pt}
}
\newcommand{\norm}[1]{{\left\lVert#1\right\rVert}}
\DeclarePairedDelimiter\abs{\lvert}{\rvert}%
\def\laplace#1{\mathfrak{L}[#1]}
\def\lone{{\mathcal{L}_1}}
\def\lonew{${\mathcal{L}_1}$ }
\def\reft{{{\textup{ref}}}}
\def \idt{{\textup{id}}}
\def\laplace#1{\mathfrak{L}[#1]}
\def\laplace#1{\mathfrak{L}[#1]}
\def\mbR{\mathbb{R}}
\def\mbI{\mathbb{I}}
\def\mcC{\mathcal{C}}
\def\mcF{\mathcal{F}}
\def\trieq{\triangleq}
\def\l1rho{L_{1\rho}}
\def\l2rho{L_{2\rho}}
\def\b10{b_{10}}
\def\b20{b_{20}}
\def\bi0{b_{i0}}
\def\cl@part {\@elt {chapter}}
\crefname{equation}{}{} 
\crefname{lem}{Lemma}{Lemmas}
\crefname{thm}{Theorem}{Theorems}
\crefname{table}{Table}{Tables}
\crefname{figure}{Fig.}{Figs.}
\crefname{rem}{Remark}{Remarks}
\crefname{assum}{Assumption}{Assumptions}
\crefname{section}{Section}{Sections}
\crefname{defn}{Definition}{Definitions}
\newtheorem{theorem}{Theorem}
\newtheorem{lemma}{Lemma}
\theoremstyle{defn}  
\theoremstyle{definition} 
\newtheorem{assumption}{Assumption}
\theoremstyle{remark}  \newtheorem{remark}{Remark}
\newcommand{\II}{\mathbb{I}}
\newcommand{\IR}{\mathbb{R}}
\newcommand{\sat}[3]{\text{Sat}_{#1}^{#2} \left( {#3} \right)}
\DeclareMathOperator*{\trace}{tr}
\def \loneAC {$\lone$AC}
\def\@fnsymbol#1{\ensuremath{\ifcase#1\or *\or \ddagger\or
   \mathsection\or \mathparagraph\or \|\or **\or \dagger\dagger
   \or \ddagger\ddagger \else\@ctrerr\fi}}
\begin{document}
\title{\lonew Adaptive Control with Switched Reference Models: Application to Learn-to-Fly}
\vspace{-.9cm}

\author{Steven Snyder \footnote{Research Engineer, Dynamics Systems and Control Branch; \url{steven.m.snyder@nasa.gov}. Member AIAA.}}
\affil{NASA Langley Research Center, Hampton, VA 23681
}
\author{Pan Zhao\footnote{Postdoctoral Researcher, Department of Mechanical Science and Engineering; \url{panzhao2@illinois.edu}. Corresponding author.} and Naira Hovakimyan \footnote{Professor, Department of Mechanical Science and Engineering; \url{nhovakim@illinois.edu}. Fellow AIAA.}}
\affil{University of Illinois at Urbana-Champaign, Urbana,
Illinois 61801}

\maketitle
\begin{abstract}
Learn-to-Fly (L2F) is a new framework that aims to replace the traditional iterative  development paradigm for aerial vehicles with a combination of real-time aerodynamic modeling, guidance, and learning control. To ensure safe learning of the vehicle dynamics on the fly, this paper presents an \lonew adaptive control (\loneAC) based scheme, which actively estimates and compensates for the discrepancy between the intermediately learned dynamics and the actual dynamics.  First, to incorporate the periodic update of the learned model within the L2F framework, this paper extends the \loneAC ~architecture to handle a switched reference system subject to unknown time-varying parameters and disturbances. The paper also includes analysis of both transient and steady-state performance of the \loneAC ~architecture in the presence of non-zero initialization error for the state predictor.
Second,  the paper presents how the proposed \loneAC ~scheme is integrated into the L2F framework, including its interaction with the baseline controller and the real-time modeling module. Finally, flight tests on an unmanned aerial vehicle (UAV) validate the efficacy of the proposed control and learning scheme.  
\end{abstract}


\section{Introduction}\label{sec:introduction}
Historically, aerodynamic modeling from computational fluid dynamics (CFD) data, wind tunnel test data, or flight test data has been a multi-step process where the tests are performed, followed by post-test analysis of the data.  
During this data analysis, it is often discovered that the data content has insufficient richness in particular areas, requiring additional testing to fill in the data gaps.  
This results in a time-consuming process that may involve a large number of test sorties to acquire the necessary data for development of an adequate aerodynamic model. 
In view of these disadvantages, there has been an effort to develop the so-called {\it Learn-to-Fly} (L2F) framework, which aims to replace the current iterative vehicle development paradigm with a combination of real-time aerodynamic modeling, learning control, and other 
techniques, enabling real-time, self-learning flight vehicles \cite{Heim2018L2F-overview}.  
The advantages of inflight modeling include the ability to analyze model fidelity and data richness while gathering data, which can in turn reduce the number of test flight sorties required.
The conventional aircraft development process and the L2F concept are depicted in Figure \ref{fig:convVsL2F}. 
Within the L2F framework, state-of-the-art in-flight aerodynamic modeling approaches are combined with guidance and learning control law design methods on board the aircraft, running in real time.  
However, during the initial learning phase, the only available model of the vehicle's aerodynamics is based on a guess, which may be very inaccurate, depending on the amount of ground testing performed.  
The control law must be able to ensure stability of the vehicle with this potentially poor model.
As the vehicle flies and additional data are collected, the aerodynamic model is improved and, consequently, the controller should adjust to provide a better closed-loop response, e.g., improved tracking performance. 
In the subsequent narration, we use {real-time modeling}, {model learning} and {system identification} interchangeably. 
\begin{figure}[htbp]
    \centering
    \includegraphics[width=1.0\linewidth]{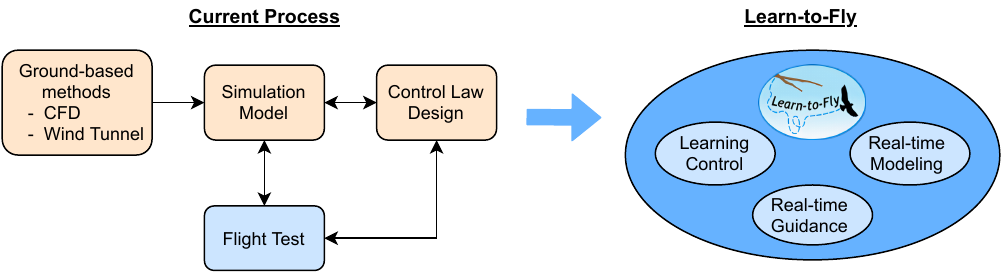}
    \caption{Conventional aircraft development process vs. Learn-to-Fly concept (reproduction of Fig.~1 in \cite{Heim2018L2F-overview})}
    \label{fig:convVsL2F}
\end{figure}

Since the available dynamic model of the vehicle could be quite poor at the initial stage of model learning, the on-board controller should be capable of stabilizing the vehicle despite the presence of large uncertainties and disturbances. 
Among different control techniques, {adaptive control} is one of the most promising methods to handle large uncertainties while providing good nominal performance. 
In the last several decades, the field of adaptive control has witnessed tremendous developments, capturing different classes of linear and nonlinear systems, subject to various parametric and state-dependent uncertainties, and unmodeled dynamics \citep{KKK-book,ioannou2012robust}. 
As one of the most well-known adaptive control techniques, {model reference adaptive control (MRAC)} deserves special attention. 
However, as experience has shown, conventional MRAC controllers lose robustness guarantees in the presence of fast adaptation, and therefore are only effective for slowly-varying uncertainties. 
\lonew adaptive control (\loneAC) resolved this issue with an appropriate filtering structure that decouples the estimation loop from the control loop \cite{naira2010l1book,Cao2008TAC}. 
As a result, arbitrarily fast adaptation, subject only to hardware limitations, can be used without sacrificing robustness. 
Note that fast adaptation is desirable for timely estimation of and compensation for the (fast-varying) uncertainties/disturbances.   
Benefiting from the fast and robust adaptation, \loneAC ~provides {guaranteed tracking performance}, both in {transient} and steady-state, and robustness (e.g., nonzero time delay margin), in the presence of various  uncertainties and disturbances \cite{naira2010l1book}.  
We note that \loneAC ~has been verified on a large number of real-world systems \citep{naira2010l1book,naira2011L1_Safety}, including NASA's subscale commercial jet \citep{gregory2009l1,gregory2010flight}, and Calspan's Learjet (a piloted aircraft) \citep{ackerman2017evaluation,puig2019learjet}.

When incorporating the adaptive control scheme into the L2F framework, we need to address a crucial issue, i.e., {updating the reference model} as the data collection and model learning processes evolve. 
For instance, at some point, assuming the real-time modeling module produces a model, $M_1$, we may use $M_1$ to produce a reference model to design an adaptive controller that tries to cancel the discrepancy between the actual uncertain dynamics and $M_1$. 
After a certain amount of time, assuming the model is updated to be $M_2$, a natural choice is to use this new model, $M_2$ to replace the old model, $M_1$, when generating the desired/nominal model for adaptive control design. 
One might wish to change the specified desired dynamics based on the updated model to ensure model matching conditions are satisfied or to choose a more natural set of desired dynamics from the potentially infinite number that meet the given design criteria (such as those in miltary or industry standards).
Locking in a specific set of desired dynamics when little is known about the vehicle takes away the option of many other viable designs.
Essentially, implementation of the L2F framework leads to a {switched reference model} for an adaptive control scheme. 
Therefore, we need to address the stability and performance of a switching adaptive control scheme.  

There have been only a few attempts to address adaptive control design for a switched reference model.  
Sang and Tao studied MRAC design for a piecewise linear (PWL) system with parametric uncertainties \cite{sang2011adaptive-pwl}, in which asymptotic tracking performance is achieved under arbitrarily fast switching when a common Lyapunov function exists for the PWL reference system. 
In the absence of such common Lyapunov function, only small tracking error in the mean-squared sense can be achieved under slow switching conditions. 
Xie and Zhao \cite{xie2016mrac-switching-lpv} studied MRAC design for a switched linear parameter-varying (LPV) system subject to parametric uncertainties under average-dwell-time switching. 
Additionally, Wu et al. \cite{wu2015adaptive-switch-async} investigated the issues of asynchronous switching in MRAC design for a switched linear system with parametric uncertainties. 
All these methods are based on a {direct} MRAC scheme, and only {asymptotic} tracking performance can be achieved at most. 
On the contrary, the \loneAC ~scheme proposed in this paper and also in our recent work \cite{snyder2019switchingL1}, to the best of our knowledge, is the first {indirect} adaptive control scheme considering a switched reference model. 
Additionally, the \loneAC ~scheme allows for consideration of time-varying parametric uncertainties and (time-dependent) disturbances instead of just the time-invariant parametric uncertainties considered in \cite{sang2011adaptive-pwl,xie2016mrac-switching-lpv,snyder2019switchingL1}. 
Finally, with the help of the \loneAC ~architecture, we will provide guaranteed stability and {transient} tracking performance in the presence of a switched reference model.

{\bf Statement of Contributions}. Motivated by the L2F concept, this paper presents an \loneAC ~based scheme to ensure safe learning of unknown dynamics of an aerial vehicle.
The proposed \loneAC ~is intended to provide the necessary robustness required to keep the vehicle flying while a learning module (e.g., \cite{L2FModeling}) learns the dynamics.
At the core is an \loneAC ~architecture with a piecewise constant (PWC) adaptive law  for a switched linear system to characterize the change of the desired/nominal dynamics as a result of periodic model update from the real-time system identification module. 
For stability and performance analysis, we introduce a {virtual} reference system that depends on the true uncertainties and represents the best achievable performance. 
Under reasonable assumptions,  we  derive  transient  performance  bounds in terms of the inputs and states of the adaptive system as compared to the same signals of the reference system. 
The influence of re-initialization error associated with the state predictor, a key component of an \lonew controller, is also analyzed. 
We then illustrate how the proposed \loneAC ~scheme is incorporated into the L2F framework. 
We finally present the results of flight tests of the proposed control and learning schemes on a fixed-wing unmanned aerial vehicle (UAV) conducted by NASA Langley Research Center (LaRC) in Virginia, USA. 
To the best of our knowledge, these flight tests were the first of this vehicle class enabled by the L2F concept, i.e., reliance on a combination of real-time modeling and robust adaptive control while minimizing time-consuming and expensive ground testing. 
To summarize, our contributions include
\begin{enumerate}[label={(\arabic*)},noitemsep,topsep=0pt]
    \item an \loneAC ~architecture with a PWC adaptive law for a switched linear system subject to time-varying parametric uncertainties and disturbances, with guaranteed robustness and transient performance;
    \item an illustration of how the proposed \loneAC ~architecture is incorporated into the control law design and update within L2F framework; and
    \item validation of the proposed \loneAC ~in the L2F framework with an existing learning scheme \cite{L2FModeling} using flight tests on a fixed-wing UAV. 
\end{enumerate}
In comparison, our preliminary work \cite{snyder2019switchingL1} addresses only time-invariant parametric uncertainties, uses a projection-based adaptive law that is not as amenable to implementation as the PWC law adopted in this paper (more details given in the first paragraph of \cref{sec::L1AC}), is not relevant to L2F, and is only validated on a numerical example. 
The current work uses L2F as motivation for its development of \loneAC ~with switched desired dynamics, but is also applicable in other situations where it may be advantageous for the desired dynamics to switch, such as having different dynamics for different tasks (e.g., precision control takeoff/landing vs. energy-efficient cruise/loiter) or when vehicle mass properties change (e.g., package pick up and delivery).
We do not seek to debate the benefits of the L2F paradigm.

The paper is organized as follows.  Section~\ref{sec:l2f-overview} gives an overview of the L2F framework. Section~\ref{sec::L1AC} presents an \loneAC ~architecture for a switched linear system with stability and performance analysis. Section~\ref{sec:AC4L2F} introduces the design of the L2F control system, including both the baseline and adaptive controllers, while Section~\ref{sec:flight-test} includes the test results on a UAV.

The notations are defined as follows. 
We denote by $\mathbb{R}^n$  the $n$-dimensional real vector space,  and $\mathbb{R}^{m\times n}$ the set of real $m$ by $n$ matrices. Let $\mbI_n$ denote an $n$ by $n$ identity matrix, and $0$ denote the scalar zero or a zero matrix of compatible dimension. 
We use $\laplace{x(t)}$ to denote the Laplace transform of a signal $x(t)$.  We use $\norm{\cdot}$ to  denote the $2$-norm of a vector or matrix. The maximal and minimal singular values of a matrix $P$ are denoted by $\bar \lambda (P)$ and $\underline \lambda(P)$, respectively. For symmetric matrices $P$ and $Q$, $P>Q$ means $P-Q$ is positive definite. 

\section{Overview of the Learn-to-Fly framework}\label{sec:l2f-overview}
Learn-to-Fly (L2F) is an advanced technology development effort aimed at merging real-time aerodynamic modeling, learning control, real-time guidance, and other disciplines in order to replace the current iterative vehicle development process with aggressive flight testing of self-learning vehicles (see \cref{fig:convVsL2F}).
Figure \ref{fig:L2fblock} presents the L2F framework, and below we provide a brief overview of its primary components.  For more details, the reader is directed to \cite{Heim2018L2F-overview} and references therein.  
\begin{figure}[htbp]
	\centering
	\includegraphics[width=0.7\linewidth]{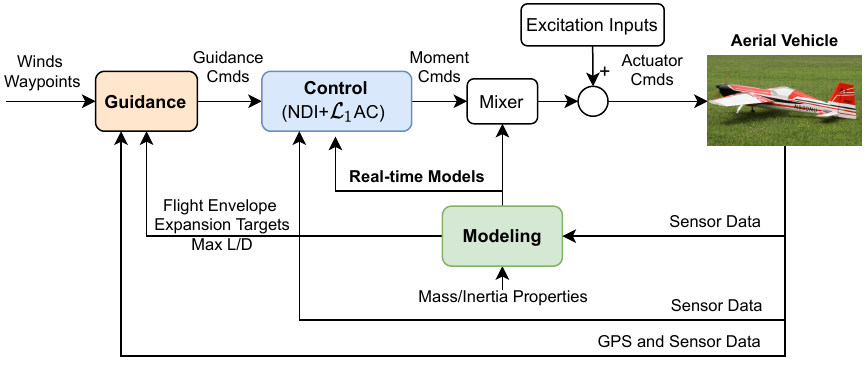}
	\caption{Block diagram of the Learn-to-Fly architecture (reproduction of Fig.~2 in \cite{Heim2018L2F-overview})}
	\label{fig:L2fblock}
\end{figure}

{\bf Modeling}:
One of the main objectives of L2F flight tests is to identify a six-degree-of-freedom mathematical model for each non-dimensional aerodynamic force and moment in real time during the flight.  
The non-dimensional forces and moments must be computed from measured aircraft states, known vehicle geometry, and known mass properties since it is not possible to directly measure these properties during flight \cite{L2FModeling}.
The measured explanatory variables include aircraft states and control surface positions.
Two primary factors in developing an accurate global aerodynamics model are the design of efficient flight maneuvers to sufficiently excite the desired aircraft dynamics and the implementation of a real-time recursive system identification scheme \cite{L2FModeling}.

To provide sufficient excitation to the vehicle dynamics, orthogonal phase-optimized multi-sine inputs were injected into the control surface commands during portions of the flight tests \cite{L2FModeling}.  These inputs are called programmed test inputs (PTIs).  Because the inputs are orthogonal, the effects of each of the inputs can be discerned despite the fact that multiple inputs are excited simultaneously.  Since these inputs are zero-mean and phase-optimized to limit the peak amplitude, their net effect on the aircraft only causes small perturbations from the nominal flight trajectory.

The real-time recursive system identification scheme used was based on multivariate orthogonal functions \cite{L2FModeling}.  
This approach uses a pool of candidate multivariate functions, called regressors, of a user-defined level of complexity.  
In theory, this pool of candidates could be arbitrarily large, but, in practice, the size of the pool is limited by the onboard computing capability \cite{L2FModeling}.  
The explanatory capability unique to each regressor is isolated, quantified, and ranked by orthogonalizing the regressors with a recursive QR decomposition.  
Coefficients for each orthogonalized regressor are calculated using a least-squares estimator that minimizes the sum of squared differences between the computed non-dimensional forces and moments and those predicted by the model.  
The model structure is determined based on measures of model fit quality and a penalty for model complexity.  
These statistical metrics determine which orthogonalized regressors are included in the mathematical model. 
Finally, the model is transformed back from the orthogonal modeling functions to the original physical quantities.  
Uncertainty bounds associated with each model parameter estimate are also computed.  
The multivariate orthogonal function modeling approach is applied to each of the six rigid-body degrees of freedom.
A more detailed discussion of the modeling module is given in \cite{L2FModeling}, where it was first presented.

{\bf Guidance}:
For the flights discussed here, the vehicle was overseen by either an R/C pilot or the real-time guidance algorithm \cite{L2FGuidance}.
An R/C safety pilot would take off and land the aircraft on each flight.  While in control of the vehicle, the R/C pilot would fly the vehicle through various conditions, including stalls, and then engage the autonomous mode.  
In the R/C mode, the control law was not engaged and the pilot's commands are passed directly to the control surfaces.  
In the autonomous mode, the vehicle would fly back and forth between two waypoints following the guidance commands of flight path angle and heading.  

{\bf Control}:
The primary purpose of the control law was to stabilize the vehicle and track the guidance commands.  
The control law had two mechanisms for handling disturbances and uncertainties within the system.  
In the short term, the controller would adapt to these unknowns like many traditional adaptive control architectures.  
However, in the long term, the controller would "learn" and adjust itself based on the real-time modeling results.  
Learning implies modifying the (desired) behavior or performance for the future based on retained knowledge from past experiences.  

In order for the real-time modeling to learn the aerodynamics, a certain amount of perturbation is required, which is provided by the multi-sine inputs described above.  
However, the excitation provided by the multi-sine inputs is at odds with the controller's goal of stabilization.  
In simulation, we found that rejecting too much disturbance hindered the real-time aerodynamic modeling’s ability to identify a model. 
During the L2F process, the conflicting goals of excitation for identification and stability for control must be balanced. 

The controller is composed of a baseline controller and an adaptive augmentation. 
The baseline controller (a nonlinear dynamic inversion (NDI) controller) was designed based on the real-time learned model to  maintain the natural frequency of the vehicle and provide sufficient damping for tracking the guidance commands. 
It uses the onboard aerodynamic model to effectively cancel out any undesirable dynamics. 
The adaptive augmentation assists the inner-loop NDI controller by comparing the measured response to the predicted response of the learned model and adapting the control signal to reduce any discrepancies. 
As mentioned before, periodic update of the learned model naturally leads to a switched reference model for the adaptive control design. The next section addresses this adaptive control design problem.

\section{\lonew adaptive control for uncertain switched linear systems} \label{sec::L1AC}
In this section, we present the design and analysis of an $\mathcal{L}_1$ adaptive controller for a switched linear time-invariant (LTI) system subject to time-varying parametric uncertainties and disturbances. 
It is shown that, with fast adaptation, the closed-loop system with the \lonew controller can behave arbitrarily close to a non-adaptive switched reference system that depends on perfect knowledge of the uncertainties. 
By starting with a stable reference system, the adaptive system is thus ensured to be stable.  
The work in this section differs from previous results \cite{snyder2019switchingL1} in that it adopts the piecewise-constant adaptive law \cite[Section~3.2]{naira2010l1book}, which is more favorable for numerical implementation, as compared to the projection based adaptive law used in \cite{snyder2019switchingL1}. 
Additionally, we explicitly consider the potential (re-)initialization errors associated with the state predictor (e.g., due to measurement inaccuracies) and analyze the performance of the proposed adaptive control systems in the presence of such errors.


\subsection{Problem Formulation}
Consider the family of multi-input multi-output LTI systems whose state-space matrices are given by
$
    \{ (A_i, B_i, C_i): i \in \mathcal{I} \}, 
$
where $\mathcal{I}$ denotes the index set. Let $\Sigma = \{ \sigma : [0,\infty) \to \mathcal{I} \}$ denote the set of piecewise-constant switching signals. For a given switching signal $\sigma \in \Sigma$, define the following switched linear system subject to parametric uncertainties and disturbances:
\begin{align}
\begin{split} \label{eq:sysdyn}
   \dot{x}(t) &= A_{\sigma} x(t) + B_{\sigma}\left(\omega_{\sigma} u(t) + \theta^\top_{\sigma}(t)x(t) + d_{\sigma}(t) \right), \\
   y(t) &= C_{\sigma} x(t), \quad x(0) = x_0, 
\end{split}
\end{align}
where $x(t) \in \IR^n$ is the system state, $u(t) \in \IR^m$ is the system input, and $y(t) \in \IR^m$ is the regulated system output,  $A_{\sigma} \in \mathcal{A} \subset \IR^{n \times n}$, $B_{\sigma} \in \mathcal{B} \subset \IR^{n \times m}$, and $C_{\sigma} \in \mathcal{C} \subset \IR^{m \times n}$ are the system matrices, $\omega_{\sigma} \in \Omega \subset \IR^{m \times m}$, $\theta_{\sigma}(t) \in \Theta \subset \IR^{n \times m}$, and $d_{\sigma}(t) \in \Delta \subset \IR^m$ denote the unknown input gain, (time-varying) parametric uncertainties and external disturbances, respectively. 
Given a switching signal $\sigma$, we denote the sequence of finite switching time instant as $t_0,t_1,\ldots,t_i,\ldots$ with $t_0=0$. In the following, we explain the major assumptions. 

\begin{assumption} \label{as:polytopes}
The sets $\mathcal{A}$, $\mathcal{B}$, $\mathcal{C}$, $\Omega$, $\Theta$, and $\Delta$ are  compact, convex polytopes.
We assume that each $B \in \mathcal{B} \subset \IR^{n \times m}$ has full column rank $m (\leq n)$.
Without loss of generality, the sets $\Theta$ and $\Delta$ are assumed to contain $0$.
 $\Omega$ is assumed to be strictly diagonally dominant (which implies non-singularity), and, without loss of generality, is assumed to contain $\II_m$.  
\end{assumption}

The control objective is to design a full-state feedback adaptive control law to ensure that $x(t)$ and $y(t)$ track the ideal signals $x_\textup{id}(t)$ and $y_\textup{id}(t)$ with quantifiable bounds on the transient and steady-state performances. The signals $x_\textup{id}(t)$ and $y_\textup{id}(t)$ are given by the nominal (or desired) dynamics:   
\begin{equation}\label{eq:ideal-dynamics}
\begin{split}
\dot{x}_\textup{id}(t) & =   A_{\sigma} x_\textup{id}(t) + B_{\sigma} u_\idt(t),\ x_\idt(0) = x_0,  \\ u_\idt(t) &= k_{\sigma} r(t),\
y_\textup{id}(t) = C_{\sigma} x_\textup{id}(t),
\end{split}
\end{equation}
where  $r(t)\in \mbR^{m}$ is a given bounded piecewise-continuous reference signal, and $k_{\sigma}\in \mbR^{m\times m}$ is a time-varying feedforward gain for achieving a desired tracking performance. 
In the L2F framework, the ideal system will be determined based on the real-time modeling results and the NDI baseline controller.

\begin{assumption} \label{as:idsysstab}
	There exist symmetric matrices $P_i$ ($i \in \mathcal{I}$) such that 
	\begin{equation} \label{eq:assump-P-lya-stab}
		\begin{split} 
		P_i &\geq \II, \; \forall i \in \mathcal{I}, \\
		A_i^\top P_i + P_i A_i &\leq - \lambda P_i, \; \forall i \in \mathcal{I}, \\
		P_i &\leq \mu P_j, \; \forall i,j \in \mathcal{I},
	\end{split}
	\end{equation}
	for some constants $\lambda > 0$ and $\mu \geq 1$.  Moreover, the switching signal $\sigma$ has a \emph{dwell time}, 
$
		\tau_d \geq \frac{\ln(\mu)}{(1-a^\star)\lambda}, 
$ for arbitrary $a^\star \in (0,1)$. 
	In other words, the switching times $t_1, t_2, \dots$ satisfy the inequality $t_{k+1} - t_{k} \geq \tau_d$ for all $k = 0,1, \dots$.
\end{assumption}

\begin{remark} \label{rem:idsysstab}
	Assumption \ref{as:idsysstab} guarantees that the switched ideal dynamics \eqref{eq:ideal-dynamics} are stable.  
	This can be easily proved following \cite[Chapter 3]{Liberzon} using the switched Lyapunov function defined as $V_{\sigma}(x_\textup{id}(t)) \triangleq x_\textup{id}^\top(t) P_{\sigma} x_\textup{id}(t).$
	Essentially, the Lyapunov function is allowed to switch, at which time it may possibly grow in value by a fixed factor $\mu$.
	By enforcing the dwell time condition based on the growth factor $\mu$ and the decay rate $\lambda$, the value of the Lyapunov will ultimately be decreasing over each switching interval.
	If we remove the dwell time constraint for the switching signal, i.e., allowing for arbitrary switching, a {common} Lyapunov function approach  is usually employed to  guarantee the stability. 
	This can be seen as a special case of dwell-time switching with $\tau_d = 0$, $\mu = 1$. 
	Note that the results derived in this section for dwell-time switching also hold for the more general case of {average-dwell-time switching}. 
	For the details of (average) dwell-time switching, see \cite[Chapter 3]{Liberzon}.
\end{remark}

\subsection{Control architecture}
The proposed control architecture is depicted in Fig.~\ref{fig:l1ac-architecture} and composed of three components, namely, a state predictor, an adaptive law and a low-pass filtered control law. For the subsequent introduction, let $B_{\sigma}^\perp\in \mbR^{n \times (n-m)}$ such that $ B_{\sigma}^\top B_{\sigma}^\perp = 0$ and $\textup{rank}(B_{\sigma}^\vee) = n$, where $B_{\sigma}^\vee \triangleq [B_{\sigma} \, B_{\sigma}^\perp ]$. 
\begin{figure}[h]
\vspace{-2mm}
    \centering
    \includegraphics[width=0.5\textwidth]{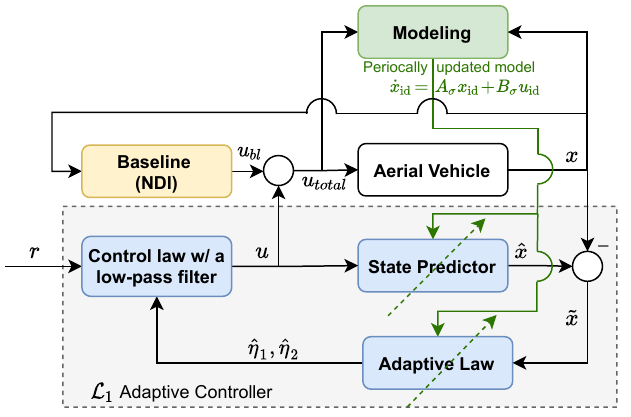}
    \caption{Proposed \loneAC~control architecture for L2F}
    \label{fig:l1ac-architecture}
\end{figure}

\noindent \textbf{State predictor:} The state predictor is given by
\begin{align} \label{eq:state-predictor}
\begin{split}
    &\dot{\hat{x}}(t) = A_{\sigma} \hat{x}(t) + B_{\sigma} (u(t) + \hat{\eta}_1(t) ) + B_{\sigma}^\perp \hat{\eta}_2(t), \\
    & \hat{x}(t_i) = \hat{x}_i,
\end{split}
\end{align}
where $\hat{x}_i$ is the reset value of the states, after switching to the sub-system $i$, and $\hat{\eta}_1(t)$ and $\hat{\eta}_2(t)$ are the estimated uncertainties.

\noindent {\bf Adaptive Law:} The estimation of the (lumped) uncertainties $\hat{\eta}_1(t)$ and $\hat{\eta}_2(t)$ is updated by the following piecewise-constant adaptive law \citep[Section~3.3]{naira2010l1book}:
\begin{align}
\begin{split} \label{eq:adaptive-law}
    \begin{bmatrix} \hat{\eta}_1 (t) \\ \hat{\eta}_2(t) \end{bmatrix} &= \begin{bmatrix} \hat{\eta}_1 (jT_s) \\ \hat{\eta}_2(jT_s) \end{bmatrix}, \quad \forall t \in [jT_s, \, (j+1)T_s), \\
    \begin{bmatrix} \hat{\eta}_1 (jT_s) \\ \hat{\eta}_2(jT_s) \end{bmatrix} &= \begin{bmatrix}\II_m & 0 \\ 0 & \II_{n-m} \end{bmatrix} \left(B_{\sigma}^\vee\right)^{-1} A_{\sigma} \left( e^{-A_{\sigma}T_s} - \II \right)^{-1}  \tilde{x}(jT_s),
\end{split}
\end{align}
where $T_s$ represents the estimation sampling time and $\tilde{x} \triangleq \hat{x} - x$ is the prediction error. 

\noindent {\bf Control Law:} 
The control input is computed through a feedback connection defined, in the Laplace domain, by
\begin{align} \label{eq:l1-control-law}
    u(s) = -\frac{D_0(s)}{s} \mu(s),
\end{align}
where $\mu(s)$ is the Laplace transform of $\mu(t)\trieq u(t) + \hat{\eta}_1(t) - k_{\sigma} r(t)$, and $D_0(s)$ is used to define a series of stable low-pass filters (defined later in \eqref{eq:low-pass-filter}) that sufficiently guarantee the stability of a reference system introduced in Section \ref{sec:sub-stab-perf-analysis}.

\subsection{Stability and performance analysis} \label{sec:sub-stab-perf-analysis}
In this section, we will analyze the stability and performance of the adaptive closed-loop system with the controller defined in \cref{eq:state-predictor,eq:adaptive-law,eq:l1-control-law}, and derive performance bounds for both transient and steady-state phases. For this purpose, we first introduce a non-adaptive reference system that depends on the actual uncertainties (and thus is non-implementable), defined as
\begin{align}
\begin{split} \label{eq:refsys}
   \dot{x}_\reft(t) &= A_{\sigma} x_\reft(t) + B_{\sigma} \big( \omega_{\sigma} u_\reft(t) + \theta^\top_{\sigma}(t) x_\reft(t) + d_{\sigma}(t) \big), \quad x_\reft(0) = x_0, \\
   u_\reft(s) &=   - \frac{D_0(s)}{s}  \mu_\reft(s), 
\end{split}
\end{align}
where $\mu_\reft(s)$ is the Laplace transform of $\mu_\reft(t) \triangleq \omega_{\sigma} u_\reft(t) + \theta^\top_{\sigma} (t) x_\reft(t) + d_{\sigma}(t)- k_{\sigma} r(t)$.
The last equation in \eqref{eq:refsys} is equivalent to\footnote{We use the input-output mapping form instead of a transfer function form in \eqref{eq:uref-equi} since the mapping is time-varying due to the existence of switching.} 
\begin{equation}\label{eq:uref-equi}
    u_\reft = - \omega_{\sigma}^{-1} \mcF_{\sigma} \left( \xi_\reft -  r_k \right), 
\end{equation}
where $\xi_\reft(t) \triangleq \theta^\top_{\sigma} (t) x_\reft(t) + d_{\sigma}(t)$, $r_k(t)\trieq k_{\sigma}r(t)$, and $\mcF_{\sigma} \triangleq \mcF_{i}$ when $\sigma(t) = i$. The (time-invariant) mapping $\mcF_i$ has the transfer function form of
\begin{equation} \label{eq:low-pass-filter}
   \mcC_i(s) \triangleq \omega_i (s\II+D_0(s)\omega_i)^{-1}D_0(s)
\end{equation}
which is a low-pass filter with DC gain equal to an identity matrix, i.e., $\mcF_i(0) = \II_m$. 
From \eqref{eq:refsys} and \eqref{eq:uref-equi}, one can see that for each subsystem $i$, the reference control input tries to cancel the uncertainties within the bandwidth of the filter $\mcF_i(s)$.

Since the reference control law utilizes the true uncertainties, it is not implementable. The reference system is only introduced to help characterize the performance of the adaptive system by providing the {ideal achievable} performance. With the reference system,  the performance of the adaptive system can be analyzed and quantified in four steps: (i) establish the stability of the reference system; (ii) quantify the difference between the state and input signals of the adaptive closed-loop system and those of the reference system; (iii) quantify the difference between the state and input signals of the reference system and those of the ideal system defined in \eqref{eq:ideal-dynamics}; (iv) based on the results from (ii) and (iii), quantify the difference between the state and input signals of the adaptive system and those of the ideal system. Due to space limitations, we focus on the first two steps, while the last two steps are straightforward following existing results \cite{naira2010l1book,zhao2020RALPV}. To simplify the analysis, we also make the following reasonable assumption. 
\begin{assumption} \label{as::Ts}
  The system switches take place at multiples of the sampling period, i.e., each switching time satisfies $t_i = kT_s$ for some $k$.
\end{assumption}
\begin{remark}
   Within the L2F framework, the time instants for model switching are pre-determined, and therefore can be easily selected to satisfy the above assumption.
\end{remark}

\subsubsection{Stability of the reference system}
Letting $(A_f,B_f,C_f,D_f)$ be a minimal realization of $D_0(s)$ with $n_f$ states, the reference system dynamics in \eqref{eq:refsys} can be rewritten in a state-space form as
\begin{align}
\begin{split} \label{eq:refsys-ss}
   \dot{\bar{x}}_\reft(t) &= \bar{A}_{\sigma} \bar{x}_\reft(t) + \bar{B}_{\sigma} d_{\sigma}(t) + \bar{E}_{\sigma} r(t), \quad \bar{x}(0) = \bar{x}_{0}, \\
   u_\reft(t) &= \bar{C} \bar{x}_\reft(t),
\end{split}
\end{align}
where $\bar{x}_\reft(t) \triangleq [x_\reft^\top(t), \; x_{f_1}^\top(t), \; x_{I_1}^\top(t)]^\top$, $\bar{x}_\reft^\top(0) \triangleq [x_0^\top, \; 0, \; 0]^\top$, $x_{f_1}(t)\in\mbR^{n_f}$ and $x_{I_1}(t)\in\mbR^{m}$ are the states of $D_0(s)$ and the integrator in \eqref{eq:refsys}, respectively, and 
\begin{align}\label{eq:A-B-E-C-bar_p-defn}
\begin{split}
 \bar{A}_{\sigma} & \trieq     \left[ \begin{array}{c|c c} A_{\sigma}+B_{\sigma}\theta^\top_{\sigma} & 0 & -B_{\sigma} \omega_{\sigma} \\ \hline  B_f \theta^\top_{\sigma} & A_f & B_f \omega_{\sigma} \\ D_f \theta^\top_{\sigma} & C_f & D_f \omega_{\sigma} \end{array} \right],\ 
  \bar{B}_{\sigma}  \trieq  {\begin{bmatrix} B_{\sigma} \\ B_f \\ D_f \end{bmatrix}},\  \bar{E}_{\sigma} \trieq \begin{bmatrix} 0 \\ B_f k_{\sigma} \\ D_f k_{\sigma} \end{bmatrix},\ 
\bar{C} \trieq \begin{bmatrix} 0 & 0 & -\II \end{bmatrix}.  
\end{split}
\end{align}
\begin{assumption} \label{as:refsysstab}
The parameters of $D_0(s)$ are designed such that there exist symmetric matrices $ \bar{P}_i(\omega)$ ($i \in \mathcal{I}$) such that for all  $(\theta, \omega) \in \Theta \times \Omega$ one has
\begin{equation}\label{eq:switching_stability_condition}
\begin{split}
    \bar{P}_i&\geq \II, \ \forall i\in \mathcal{I}, \\
    \bar{A}_i^\top \bar{P}_i + \bar{P}_i \bar{A}_i   &\leq -\lambda \bar{P}_i, \ \forall i\in \mathcal{I},  \\
     \bar{P}_i &\leq \mu \bar{P}_j,\  
    \forall i,j, \in \mathcal{I},
\end{split}
\end{equation}
for some constants $\lambda>0$ and $\mu\geq 1$.  Moreover, the switching signal satisfies the dwell time constraint 
\begin{equation}\label{eq:dwell-time-constraint}
    \tau_d \geq \frac{\ln(\mu)}{(1-a^\star)\lambda}
\end{equation}
 for arbitrary $a^\star \in (0,1)$.
\end{assumption}

\begin{remark}
Assumption \ref{as:refsysstab} ensures that the control parameters of $D_0(s)$ are designed to stabilize the reference system (following \cite{Liberzon} with the switched Lyapunov function $V_\sigma(\bar{x}_\reft(t)) \triangleq \bar{x}_\reft(t)^\top(t) \bar{P}_\sigma(\omega) \bar{x}_\reft(t)$).  
As the gain of $D_0(s)$, corresponding to the bandwidth of the low-pass filter $\mcF_i(s)$ in \eqref{eq:low-pass-filter}, increases toward infinity, $\mcF_i$ tends to become an identity matrix, $\II$, for all $i \in \mathcal{I}$.
Thus, the effect of the uncertainty in the reference system vanishes.
By substituting \eqref{eq:uref-equi} into the dynamics of \eqref{eq:refsys} with $\mcF_\sigma \equiv \II$, we see that the uncertainties are canceled perfectly, and the reference system becomes the ideal system, which is stable under Assumption \ref{as:idsysstab}.
Therefore, by a perturbation argument, theoretically, $D_0(s)$ can always be designed to stabilize the reference system if Assumption~\ref{as:idsysstab}, which guarantees the stability of the ideal system, holds.
\end{remark}

Note that when the reference system is stable, for a given reference signal $r$ and initial condition $x_0$, there exist constants $\rho_r$ and $\rho_{ur}$ such that 
\begin{align*}
	\norm{x_\reft(t)} \leq \rho_r, \\
	\norm{u_\reft(t)} \leq \rho_{ur}, 
\end{align*}
for all $t \geq 0$ and for all admissible uncertainties. These bounds are independent of the adaptive law and state predictor.

\subsubsection{Transient and steady-state performance}
For the subsequent analysis, define $
    D_\theta  \triangleq \max_{\theta \in \Theta} \norm{\theta}, \ 
    D_d  \triangleq \max_{d \in \Delta} \norm{d}, \ 
    D_\omega  \triangleq \max_{\omega \in \Omega} \abs{\trace(\omega - \II)}.
$
Computing the prediction error dynamics from \eqref{eq:sysdyn} and \eqref{eq:state-predictor}, we have
\begin{align} 
\begin{split} \label{eq:ch4PWCxtildedyn}
    \dot{\tilde{x}}(t) &= A_{\sigma} \tilde{x}(t) + B_{\sigma} \left( (\II-\omega_{\sigma}) u + \hat{\eta}_1(t) - \eta_{\sigma}(t) \right) + B_{\sigma}^\perp \hat{\eta}_2(t), \\
    \tilde{x}(t_i) &= \hat{x}(t_i) - x(t_i),
\end{split}
\end{align}
where  $x(t_i)$ is the state of the actual system when the $i$th switching happens at $t_i$, and 
\begin{equation}\label{eq:eta_p-defn}
      \eta_{\sigma}(t) = \theta_{\sigma}^\top(t)x(t) + d_{\sigma}(t).
\end{equation}
For the following derivations, we first define some bounding functions:
\begin{align*}
    \alpha_1(t) \triangleq \max_{i \in \mathcal{I}} \norm{e^{A_i t}},\ 
    \alpha_2(t) \triangleq \max_{i \in \mathcal{I}} \int_0^t \norm{ e^{A_i(t-\tau)} A_i \left(e^{-A_i T_s} - \II \right)^{-1} } d\tau, \ 
    \alpha_3(t) \triangleq \max_{i \in \mathcal{I}} \int_0^t \norm{ e^{A_i(t-\tau)} B_i} d\tau.
\end{align*}
We further let 
\begin{align*}
    \bar{\alpha}_1(T_s) \triangleq \max_{t \in [0, \, T_s]} \alpha_1(t), \quad 
    \bar{\alpha}_2(T_s) \triangleq \max_{t \in [0, \, T_s]} \alpha_2(t), \quad
    \bar{\alpha}_3(T_s) \triangleq \max_{t \in [0, \, T_s]} \alpha_3(t).
\end{align*}
Notice that $\bar{\alpha}_1(T_s)$ and $\bar{\alpha}_2(T_s)$ are bounded.  Moreover, $\bar{\alpha}_3(T_s)$ satisfies
\begin{align} \label{eq:alpha3lim}
    \lim_{T_s \to 0} \bar{\alpha}_3(T_s) = 0.
\end{align}
Hence, for arbitrary $\delta_0 > 0$, $T_s$ can be chosen such that 
\begin{align} \label{eq:ch4Tscond}
    \left(\bar{\alpha}_1(T_s) + \bar{\alpha}_2(T_s) + 1\right)\bar{\alpha}_3(T_s) \left( D_\omega \rho_u + D_\theta \rho + D_\sigma \right) < \delta_0.
\end{align}

\begin{assumption} \label{assump:xtilde_initialization_constraint}
 We further assume that the re-initialization error at each switching instant satisfies
  \begin{equation}\label{eq:initial_error_constraints}
      \norm{\tilde{x}(t_i)} \leq \bar{\alpha}_3(T_s) \left( D_\omega \rho_u + D_\theta \rho + D_\sigma \right).
  \end{equation}
\end{assumption}
\begin{remark}
   Note that the states of the actual system are measurable; therefore, theoretically, the re-initialization error could be made zero since the states of the state predictor can be set to an arbitrary value. In practice, the states of the actual system may not be measured exactly, e.g., due to measurement error/noise. We thus consider non-zero re-initialization error to account for the practical scenario.
\end{remark}

The following lemma establishes a bound on the prediction error, which can be arbitrarily small under some conditions on the estimation sampling time $T_s$ and re-initialization error.
\begin{lemma} \label{lem::PWCxtilde}
   Assume $\norm{x(t)} \leq \rho$ and $\norm{u(t)} \leq \rho_u$ for all $0 \leq t \leq \tau$ for some positive constants $\rho $ and $\rho_u$. Given an arbitrary positive constant $\delta_0$, if $T_s$ is chosen to satisfy \eqref{eq:ch4Tscond} and Assumption~\ref{assump:xtilde_initialization_constraint} holds, then $\norm{\tilde{x}(t)} < \delta_0$ for all $0 \leq t \leq \tau$.
\end{lemma}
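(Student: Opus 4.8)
The plan is to exploit the piecewise-constant structure of both the adaptive law and the switching signal. By Assumption~\ref{as::Ts} the active matrices $(A_\sigma,B_\sigma,B_\sigma^\perp,\omega_\sigma)$ are constant on each estimation interval $[jT_s,(j+1)T_s)$, and on that interval the estimates $\hat{\eta}_1,\hat{\eta}_2$ are frozen at their sampled values. I would therefore analyze \eqref{eq:ch4PWCxtildedyn} one sampling interval at a time via the variation-of-constants formula, writing, for $t\in[jT_s,(j+1)T_s)$ with $A\trieq A_i$, $B\trieq B_i$ denoting the active subsystem,
$$\tilde{x}(t) = e^{A(t-jT_s)}\tilde{x}(jT_s) + \int_{jT_s}^{t} e^{A(t-s)}\!\left[B\big((\II-\omega)u(s)-\eta(s)\big) + B_i^\vee\begin{bmatrix}\hat{\eta}_1\\ \hat{\eta}_2\end{bmatrix}\right] ds,$$
and then bounding $\norm{\tilde{x}}$ first at the sample points and then in the interior of each interval.

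The crux is to show that at the right endpoint the frozen-estimate term exactly cancels the homogeneous term. Since $\hat{\eta}_1,\hat{\eta}_2$ are constant on the interval, $\int_0^{T_s}e^{A\tau}d\tau = A^{-1}(e^{AT_s}-\II)$, and, from the adaptive law \eqref{eq:adaptive-law}, $B_i^\vee\begin{bmatrix}\hat{\eta}_1\\ \hat{\eta}_2\end{bmatrix} = B_i^\vee (B_i^\vee)^{-1}A(e^{-AT_s}-\II)^{-1}\tilde{x}(jT_s) = A(e^{-AT_s}-\II)^{-1}\tilde{x}(jT_s)$, the estimate-driven contribution over the full interval equals $A^{-1}(e^{AT_s}-\II)\,A(e^{-AT_s}-\II)^{-1}\tilde{x}(jT_s)$. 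Because $A^{-1}$, $A$, and $e^{\pm AT_s}$ are all functions of $A$ and hence commute, this simplifies to $(e^{AT_s}-\II)(e^{-AT_s}-\II)^{-1}\tilde{x}(jT_s) = -e^{AT_s}\tilde{x}(jT_s)$, using $e^{AT_s}-\II = -e^{AT_s}(e^{-AT_s}-\II)$. This exactly annihilates $e^{AT_s}\tilde{x}(jT_s)$, leaving
$$\tilde{x}((j+1)T_s) = \int_{jT_s}^{(j+1)T_s} e^{A((j+1)T_s-s)} B\big((\II-\omega)u(s)-\eta(s)\big)\,ds,$$
so the accumulated error is reset at every sample and only the one-step disturbance survives. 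Using $\norm{x}\le\rho$, $\norm{u}\le\rho_u$ to bound the integrand together with $\int_0^{T_s}\norm{e^{A(T_s-s)}B}\,ds\le\bar{\alpha}_3(T_s)$, this gives $\norm{\tilde{x}((j+1)T_s)}\le \bar{\alpha}_3(T_s)(D_\omega\rho_u+D_\theta\rho+D_\sigma)$. Combining this with Assumption~\ref{assump:xtilde_initialization_constraint} for those sample points that are also switching instants, I obtain the uniform sample-point bound $\norm{\tilde{x}(jT_s)}\le \bar{\alpha}_3(T_s)(D_\omega\rho_u+D_\theta\rho+D_\sigma)$ for every $j$.

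It then remains to propagate this bound to arbitrary $t$. For $t\in[jT_s,(j+1)T_s)$ I would bound the three terms of the variation-of-constants formula separately: the homogeneous term by $\bar{\alpha}_1(T_s)\norm{\tilde{x}(jT_s)}$; the frozen-estimate term, writing $B_i^\vee\begin{bmatrix}\hat{\eta}_1\\ \hat{\eta}_2\end{bmatrix} = A(e^{-AT_s}-\II)^{-1}\tilde{x}(jT_s)$ and using that $e^{A(t-s)}$ commutes with $A(e^{-AT_s}-\II)^{-1}$, by $\bar{\alpha}_2(T_s)\norm{\tilde{x}(jT_s)}$, which is exactly what the bounding function $\alpha_2$ is defined to capture; and the true-uncertainty term by $\bar{\alpha}_3(T_s)(D_\omega\rho_u+D_\theta\rho+D_\sigma)$. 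Inserting the sample-point bound for $\norm{\tilde{x}(jT_s)}$ yields $\norm{\tilde{x}(t)}\le(\bar{\alpha}_1(T_s)+\bar{\alpha}_2(T_s)+1)\,\bar{\alpha}_3(T_s)\,(D_\omega\rho_u+D_\theta\rho+D_\sigma)$, which is $<\delta_0$ by the choice \eqref{eq:ch4Tscond}.

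The main obstacle I expect is justifying the exact cancellation rigorously: it relies on $e^{-A_iT_s}-\II$ being invertible, so that both the adaptive law and the above manipulations are well defined, which forces each $A_i$ to have no eigenvalue that is an integer multiple of $2\pi\sqrt{-1}/T_s$ (in particular no zero eigenvalue), and on the commutation of the matrix exponentials with $A_i$ and $A_i^{-1}$. A secondary subtlety is the bookkeeping across the two kinds of sample instants — genuine switches, where Assumption~\ref{assump:xtilde_initialization_constraint} supplies the reset-error bound, versus ordinary samples, where the cancellation identity supplies it — so that a single bound holds uniformly and the interval-by-interval argument closes on all of $[0,\tau]$.
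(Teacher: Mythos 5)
Your proposal is correct and follows essentially the same route as the paper's own proof: variation of constants on each sampling interval, exact cancellation of the homogeneous term by the frozen piecewise-constant estimate at the sample points (the paper asserts $\zeta_1(t_i+jT_s+T_s)=0$; you supply the commutation algebra behind it), a uniform sample-point bound of $\bar{\alpha}_3(T_s)(D_\omega\rho_u+D_\theta\rho+D_\sigma)$ combined with Assumption~\ref{assump:xtilde_initialization_constraint} at switching instants, and then the interior bound $(\bar{\alpha}_1+\bar{\alpha}_2+1)\bar{\alpha}_3(\cdot)<\delta_0$ via \eqref{eq:ch4Tscond}. Your closing caveat about invertibility of $e^{-A_iT_s}-\II$ is a well-posedness condition already implicit in the definition of the adaptive law \eqref{eq:adaptive-law}, not an additional gap in the argument.
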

\begin{proof}
  For arbitrary $j \geq 0$, it follows from the prediction error dynamics in \eqref{eq:ch4PWCxtildedyn} that for the $i$th system on the interval $t \in [0, \, T_s]$ with $t_i + jT_s + t \leq \min{(t_{i+1},\tau)}$, we have
\begin{align} \label{eq:xtilde_t}
\begin{split}
    \tilde{x}(t_i + jT_s + t) &= \underbrace{e^{A_i t} \tilde{x}(t_i + jT_s) + \int_0^t e^{A_i(t-\tau)} B_i^\vee \begin{bmatrix} \hat{\eta}_1(\tau) \\ \hat{\eta}_2(\tau) \end{bmatrix} d \tau}_{\zeta_1(t_i + jT_s+t)} + \underbrace{\int_0^t e^{A_i(t-\tau)} B_i \left( (\II-\omega_i)u_i - \eta_i(\tau) \right) d\tau}_{\zeta_2(t_i + jT_s+t)}.
\end{split}
\end{align}
Substituting the adaptive law in \eqref{eq:adaptive-law} and evaluating at $t = T_s$, we get
\begin{align*}
    \zeta_1(t_i + jT_s + T_s) &= e^{A_i T_s} \tilde{x}(jT_s) + \int_0^{T_s} e^{A_i(T_s-\tau)}B_i^\vee \left(B_i^\vee\right)^{-1}A_i \left( e^{-A_iT_s} - \II \right)^{-1} \tilde{x}(jT_s) d\tau = 0 \\
    \zeta_2(t_i + jT_s + T_s) &= \int_0^{T_s} e^{A_i(T_s-\tau)} B_i \left( (\II-\omega_i) u - \eta_i(\tau) \right) d \tau \leq \bar{\alpha}_3(T_s) \left( D_\omega \rho_u + D_\theta \rho + D_\sigma \right) 
\end{align*}
Thus, for arbitrary $j \geq 0$, it follows that $\tilde{x}(t_i + jT_s) \leq \bar{\alpha}_3(T_s) \left( D_\omega \rho_u + D_\theta \rho + D_\sigma \right)$.  
From \eqref{eq:xtilde_t} and substituting the adaptive laws \eqref{eq:adaptive-law}, we obtain 
\begin{align} \label{eq:ch4PWCxtildebnd}
    \norm{\tilde{x}(t_i + jT_s + t)} &\leq \norm{e^{A_i t} \tilde{x}(t_i + jT_s)} + \norm{\int_0^t e^{A_i(t-\tau)} A_i \left(\II - e^{-A_i T_s} \right)^{-1} \tilde{x}(t_i + jT_s) d \tau} \nonumber+ \norm{\int_0^t - e^{A_i(t-\tau)} B_i d_{\sigma}(\tau) d \tau} \nonumber \\
    & \leq \alpha_1(t) \norm{\tilde{x}(jT_s)} + \alpha_2(t) \norm{\tilde{x}(jT_s)} + \alpha_3(t) \left( D_\omega \rho_u + D_\theta \rho + D_d \right) \nonumber\\
       &\leq \left( \bar{\alpha}_1(T_s) + \bar{\alpha}_2(T_s) + 1\right) \bar{\alpha}_3(T_s)  \left( D_\omega \rho_u + D_\theta \rho + D_d \right) \nonumber  < \delta_0,
\end{align}
where the last inequality is due to \eqref{eq:ch4Tscond}.
This bound holds uniformly for all $i \in \mathcal{I}$, for all $j \geq 0$, and for all $t \in [0,T_s]$. Therefore, 
$\norm{\tilde{x}(t)} < \delta_0$ for all $0 \leq t \leq \tau$. The proof is complete.
\end{proof}

\begin{remark} \label{rem::reinit}
  If the re-initialization is not done such that Assumption \ref{assump:xtilde_initialization_constraint} is met, we see from the proof of Lemma \ref{lem::PWCxtilde} that the effect of this re-initialization error disappears after one time step.  A similar observation is made in \lonew adaptive control of linear parameter-varying systems \cite{zhao2020RALPV} and adaptive estimation for nonlinear systems \cite{zhao2020aR-cbf}.  Moreover, by allowing the state predictor to evolve continuously, i.e., no re-initialization, Assumption \ref{assump:xtilde_initialization_constraint} will automatically be satisfied all the time except perhaps at the initial time $t=0$. 
  Note that in the earlier papers with projection-based adaptation, e.g. \cite{cao2008nonzero}, the initialization error results in additional exponentially decaying terms in the performance bounds. In this paper, the proposed adaptive law provides improved performance in the presence of initialization errors by eliminating their effect after a single time step.
\end{remark}

From the prediction error dynamics in \eqref{eq:ch4PWCxtildedyn}, we have that for all $i \in \mathcal{I}$,
\begin{equation}\label{eq:eta_i_bar}
    \tilde{\eta}_i(t) = B_i^ \dagger \left( \dot{\tilde{x}}(t) - A_i \tilde{x}(t) - B_i^\perp \hat{\eta}_2(t) \right) = B_i^ \dagger \dot{\tilde{x}}(t) - B_i^\dagger A_i \tilde{x}(t),
\end{equation}
where $B_i^ \dagger$ is the pseudo-inverse of $B_i$. 
Letting $y_{f_2}(t)$ be the output after passing $B_i^\dagger \dot{\tilde{x}}(t)$ through the filter $\mcF_i(s)$ (defined in \eqref{eq:low-pass-filter}), we get
\begin{align}
    \begin{bmatrix} \dot{x}_{f_2}(t) \\ \dot{x}_{I_2}(t) \end{bmatrix}= \underbrace{\begin{bmatrix} A_f & B_f \omega_i \\ C_f & D_f \omega_i \end{bmatrix}}_{\trieq \bar{F}_i} \begin{bmatrix} x_{f_2}(t) \\ x_{I_2}(t) \end{bmatrix} + \underbrace{\begin{bmatrix} B_f \\ D_f \end{bmatrix}}_{\trieq\bar{B}_f} B_i^\dagger \tilde{x}(t),\quad  
    y_{f_2}(t) = \underbrace{\begin{bmatrix} C_f & D_f \omega_i \end{bmatrix}}_{\trieq\bar{L}_i} \begin{bmatrix} x_{f_2}(t) \\ x_{I_2}(t) \end{bmatrix} +  D_f  B_i^\dagger \tilde{x}(t) - \gamma_i(t), \label{eq:B-dagger-xbar-dot-dyn}
\end{align}
where $x_{f_2}(t)\in\mbR^{n_f}$ and $x_{I_2}(t)\in\mbR^{m}$ are the states of $D_0(s)$ and the integrator in \eqref{eq:low-pass-filter}, respectively, and
$
    \gamma_i(t) = \bar{C}_f e^{\bar{F}_i (t-t_i)} \bar{B}_f B_i^\dagger \tilde{x}(t_i)
$
 with $\bar{C}_f = \left[0 \quad \II\right]$. 
We note that $\gamma_i(t)$ can be bounded by
\begin{align} \label{eq:ch4gammabnd}
    \norm{\gamma_i(t)} \leq \kappa_\gamma e^{\Lambda_{\bar{F}}(t-t_i)}  \norm{ \tilde{x}(t_i)},
\end{align}
where 
$
    \kappa_\gamma = \max_{i \in \mathcal{I}} \norm{\bar{C}_f \bar{B}_f B_i^\dagger}$ and $ \Lambda_{\bar{F}} = \max_{i \in \mathcal{I}} \lambda_{max}(\bar{F}_i),
$ with $\lambda_{max}(\cdot)$ representing the largest eigenvalue.  

We next characterize the difference between the responses of the adaptive closed-loop system and of the reference system.  Note that $u$ in \eqref{eq:l1-control-law} can be equivalently written as
\begin{equation}\label{eq:control-law-filter-form}
	u = - \omega_{\sigma}^{-1} \mcF_{\sigma} (\eta_{\sigma}+\tilde{\eta}_{\sigma}\ -  k_{\sigma}r), 
\end{equation}
where $\eta_{\sigma}(t)$ is defined in \eqref{eq:eta_p-defn}, $\mcF_{\sigma}$ is introduced in \eqref{eq:uref-equi} and $\tilde{\eta}_{\sigma}(t) \trieq \hat{\eta}_1(t) - \eta_{\sigma}(t) - (\omega_{\sigma} - \II) u(t)$. 
Applying the control law \eqref{eq:control-law-filter-form} to the system dynamics \eqref{eq:sysdyn}, we can rewrite the adaptive closed-loop system similarly to \eqref{eq:refsys-ss} as
\begin{equation}\label{eq:sysdyn-ss}
    \begin{split}
    \dot{\bar{x}}(t) &= \bar A_{\sigma} \bar{x}(t) + \bar B_{\sigma}d_\sigma(t) + \bar E_{\sigma} r(t) + \begin{bmatrix} 0 & B_f^\top & D_f^\top  \end{bmatrix}^\top \tilde{\eta}_{\sigma}(t), \\
    u(t) & = \bar{C} \bar{x}(t),
\end{split}
\end{equation}
where $\bar{x}(t) \triangleq [x^\top(t), \; x_{f_1}^\top(t), \; x_{I_1}^\top(t)]^\top$, $\bar{x}^\top(0) \triangleq [x_0^\top, \; 0, \; 0]^\top$, $x_{f_1}$ and $x_{I_1}$ have the same meaning as in \eqref{eq:refsys-ss}, $\bar A_{\sigma},\ \bar B_{\sigma}$, $\bar C$ and $\bar E_{\sigma}$ are defined in \eqref{eq:A-B-E-C-bar_p-defn}.

The difference between the state and input of the actual system and those of the reference system is denoted as $e(t) \trieq x_\reft(t) - x(t)$ and $e_u(t) \trieq u_\reft(t) - u(t)$. With the definition of 
\begin{equation}\label{eq:bar_e_defn}
	\bar{e}(t) \triangleq [e^\top(t), \; x_{f_1}^\top(t), \; x_{I_1}^\top(t)]^\top, 
\end{equation}
the equations \eqref{eq:refsys-ss} and \eqref{eq:sysdyn-ss} imply that
\begin{equation} \label{eq:ch4edyn}
\begin{split}
      \dot{\bar e}(t) &= \bar A_{\sigma} {\bar e}(t)  - \begin{bmatrix} 0 & B_f^\top & D_f^\top  \end{bmatrix}^\top \tilde{\eta}_{\sigma}(t), \quad  \bar e(0) =  0, \\
  e_u(t) &= \bar C {\bar e}(t)
\end{split}
\end{equation}
Applying \cref{eq:eta_i_bar,eq:B-dagger-xbar-dot-dyn} to the error dynamics in \eqref{eq:ch4edyn} over the time interval $t \in [t_i, t_{i+1})$, we have
\begin{align}
\begin{split} \label{eq:ch4edyn_i_compact}
   \begin{bmatrix} \dot{\bar{e}}(t) \\ \dot{\bar{x}}_{f_2}(t) \end{bmatrix} &= \begin{bmatrix} \bar{A}_i & \bar{H}_i \\ 0 & \bar{F}_i \end{bmatrix} \begin{bmatrix} \bar{e}(t) \\ \bar{x}_{f_2}(t) \end{bmatrix} + \begin{bmatrix} \bar{J}_i \\ \bar{G}_i \end{bmatrix} \tilde{x}(t) + \begin{bmatrix} \bar{\bar{B}}_i \\ 0 \end{bmatrix} \gamma_i(t), \\
   \begin{bmatrix} e_u(t) \end{bmatrix} &= \begin{bmatrix} \bar{C} & \bar{L}_i \end{bmatrix} \begin{bmatrix} \bar{e}(t) \\ \bar{x}_{f_2}(t) \end{bmatrix} + \begin{bmatrix} -D_f B_i^\dagger \end{bmatrix} \tilde{x}(t),
\end{split}
\end{align}
where $\bar{x}_{f_2}(t) = [x_{f_2}^\top(t), \; x_{I_2}^\top(t)]^\top$, $\bar{e}(t) = [e^\top(t), \; \bar{x}_{f_1}^\top(t)]^\top$, and 
\begin{align*}
\begin{gathered}
   \bar{H}_i = \begin{bmatrix} -B_i C_f & -B_i D_f \omega_i \\ 0 & 0 \\ 0 & 0 \end{bmatrix}, 
   \  \bar{J}_i = \begin{bmatrix} -D_f B_i^\dagger \\ -B_fB_i^\dagger A_i \\ -D_f B_i^\dagger A_i \end{bmatrix}, 
   \  \bar{\bar{B}}_i = \begin{bmatrix} B_i \omega_i \\ 0 \\ 0 \end{bmatrix}, \ 
   \bar{G}_i = -\bar{B}_f B_i^\dagger A_i. 
\end{gathered}
\end{align*}

We partition $\bar{P}_i$ in \eqref{eq:switching_stability_condition} along the same partition as $\bar{A}_i$ in \eqref{eq:A-B-E-C-bar_p-defn}, which leads to 
\begin{align*}
	\bar{P}_i &= \left[ \begin{array}{c|c} P_i & R_i \\ \hline R_i^\top & S_i \end{array}\right].
\end{align*}
Further defining $Q_i \triangleq S_i - R_i^\top P_i^{-1} R_i$, we obtain
\begin{equation}
	\begin{split} \label{eq::switching_Q}
		Q_i&\geq \II, \ \forall i\in \mathcal{I}, \\
		\bar{F}_i^\top Q_i + Q_i \bar{F}_i   &\leq -\lambda Q_i, \ \forall i\in \mathcal{I},  \\
		Q_i &\leq \mu Q_j,\  
		\forall i,j, \in \mathcal{I}.
	\end{split}
\end{equation}
Let a scalar $\nu > 0$ satisfy
\begin{align} \label{eq::nudef}
	- \lambda a  \bar{P}_i + \frac{1}{\nu \lambda a}\bar{P}_i \bar{H}_i Q_i^{-1} \bar{H}_i^\top \bar{P}_i < 0 \quad \forall i \in \mathcal{I}, 
\end{align}
for some $a \in (0,a^\star)$, where $a^\star$ was introduced in \eqref{eq:dwell-time-constraint}. Note that such a $\nu$ always exists since $\bar{P}_i > 0$.

Finally, let
\begin{equation}\label{eq:delta12-defn}
	\begin{split}
		\delta_1 \triangleq \sqrt{\left( \mu \left( 1 - \mu^{\frac{a-a^\star}{1-a^\star}} \right)^{-1}  + 1 \right) \frac{g}{(1-a)\lambda} \left( 1 + \kappa_\gamma^2 \right)} \delta_0,\quad   
		\delta_2 \triangleq \max_{i \in \mathcal{I}} \norm{\begin{bmatrix} \bar{C} & \bar{L}_i/\sqrt{\nu} \end{bmatrix}} \delta_1 + \max_{i \in \mathcal{I}} \norm{D_f B_i^\dagger} \delta_0.
	\end{split}
\end{equation}

We are now ready to state the following theorem which quantifies the performance of the adaptive control system, both transient and steady-state, in terms of its states and inputs, as compared to those of the reference system. 
\begin{theorem} \label{thm:ebnd}
Consider the closed-loop adaptive system with the \lonew~controller defined via \cref{eq:state-predictor,eq:adaptive-law,eq:l1-control-law} and the closed-loop reference system \eqref{eq:refsys}. 
			Suppose that there exist $D_0(s)$ and $\bar{P}_i(\omega)$ ($i\in \mathcal{I}$) and some constants $\lambda>0$ and $\mu\geq 1$ such that the inequalities in \eqref{eq:switching_stability_condition} hold  for all  $(\theta, \omega) \in \Theta \times \Omega$; furthermore, suppose that  the dwell time satisfies \eqref{eq:dwell-time-constraint} and the sampling time $T_s$ satisfies \eqref{eq:ch4Tscond}. Then, for all $t \geq 0$, we have
    \begin{align}
        \norm{\tilde{x}(t)} &< \delta_0, \label{eq:ch4PWCxtildebnd2}\\
        \norm{x(t)} & \leq \rho \triangleq \rho_r + \delta_1, \label{eq:ch4PWCxbnd}\\
        \norm{u(t)} & \leq \rho_u \triangleq \rho_{ur} + \delta_2, \label{eq:ch4PWCubnd}\\
        \norm{x_\reft(t) - x(t)} &\leq  \delta_1, \label{eq:ch4PWCebnd}\\
        \norm{u_\reft(t) - u(t)} & \leq \delta_2,\label{eq:ch4PWCeubnd}
    \end{align}
    where $\delta_0$ is introduced in \eqref{eq:ch4Tscond} and $\delta_i$ ($i=1,2$) are defined in \eqref{eq:delta12-defn}.
\end{theorem}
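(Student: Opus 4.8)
The plan is to run the standard \lonew bootstrapping (contradiction) argument, using Lemma~\ref{lem::PWCxtilde} to control the prediction error and a switched Lyapunov analysis of the error dynamics \eqref{eq:ch4edyn_i_compact} to control the gap between the adaptive and reference trajectories. First I would suppose, for contradiction, that \eqref{eq:ch4PWCxbnd} and \eqref{eq:ch4PWCubnd} fail to hold for all $t\geq 0$. Since $\bar e(0)=0$ and the filter states vanish at $t=0$, while $\norm{\tilde x(0)}<\delta_0$ follows from Assumption~\ref{assump:xtilde_initialization_constraint} together with \eqref{eq:ch4Tscond}, both bounds hold with \emph{strict} inequality at $t=0$. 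By continuity of $x$ and $u$ there is then a first time $\tau$ at which $\norm{x(\tau)}=\rho$ or $\norm{u(\tau)}=\rho_u$, with both bounds holding on $[0,\tau]$. On this interval the hypotheses $\norm{x}\le\rho$, $\norm{u}\le\rho_u$ of Lemma~\ref{lem::PWCxtilde} are satisfied, so $\norm{\tilde x(t)}<\delta_0$ there, which in turn bounds $\gamma_i$ through \eqref{eq:ch4gammabnd}.

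The core of the argument is a switched Lyapunov function for \eqref{eq:ch4edyn_i_compact}. I would take $V(t)\triangleq \bar e^\top(t)\bar P_\sigma \bar e(t)+\nu\,\bar x_{f_2}^\top(t) Q_\sigma \bar x_{f_2}(t)$, with $\bar P_i$ from \eqref{eq:switching_stability_condition}, $Q_i$ from \eqref{eq::switching_Q}, and $\nu$ from \eqref{eq::nudef}. Differentiating along \eqref{eq:ch4edyn_i_compact} and invoking the contraction inequalities $\bar A_i^\top\bar P_i+\bar P_i\bar A_i\le-\lambda\bar P_i$ and $\bar F_i^\top Q_i+Q_i\bar F_i\le-\lambda Q_i$, the only obstacle is the off-diagonal coupling $\bar H_i$. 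I would absorb the cross term $2\bar e^\top\bar P_i\bar H_i\bar x_{f_2}$ by Young's inequality with weight $\nu\lambda a$; this produces exactly the quadratic form $\tfrac{1}{\nu\lambda a}\bar P_i\bar H_i Q_i^{-1}\bar H_i^\top\bar P_i$, which by \eqref{eq::nudef} is dominated by $\lambda a\,\bar P_i$, leaving the clean dissipation estimate $\dot V\le-\lambda(1-a)V+F$. Here the residual forcing $F$ is a bounded linear combination of $\tilde x$ and $\gamma_i$, controlled, respectively, by $\delta_0$ via Lemma~\ref{lem::PWCxtilde} and by \eqref{eq:ch4gammabnd}, and collecting its coefficients reproduces the factor $g(1+\kappa_\gamma^2)$ inside \eqref{eq:delta12-defn}.

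Next I would integrate this inequality over each dwell interval and account for the jump $V(t_k^+)\le\mu\,V(t_k^-)$ permitted by $\bar P_i\le\mu\bar P_j$ and $Q_i\le\mu Q_j$. Because the interval decay rate $\lambda(1-a)$ is strictly faster than the rate $\lambda(1-a^\star)$ used in the dwell-time bound \eqref{eq:dwell-time-constraint}, the per-interval contraction is $\mu\cdot e^{-\lambda(1-a)\tau_d}\le\mu^{(a-a^\star)/(1-a^\star)}<1$, so summing the accumulated jumps over all past switches yields a convergent geometric series; collecting the resulting factor $\mu(1-\mu^{(a-a^\star)/(1-a^\star)})^{-1}+1$ reproduces $V(t)<\delta_1^2$, i.e.\ exactly the definition of $\delta_1$ in \eqref{eq:delta12-defn}. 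Since $\bar P_i\ge\II$ and $Q_i\ge\II$ give $\norm{\bar e}^2+\nu\norm{\bar x_{f_2}}^2\le V$, this delivers $\norm{e(t)}\le\norm{\bar e(t)}<\delta_1$, which is \eqref{eq:ch4PWCebnd}; feeding the weighted vector $[\bar e^\top,\ \sqrt{\nu}\,\bar x_{f_2}^\top]^\top$ into the output row of \eqref{eq:ch4edyn_i_compact}, so that $\bar L_i$ enters scaled by $1/\sqrt{\nu}$, gives $\norm{e_u(t)}<\delta_2$, which is \eqref{eq:ch4PWCeubnd}.

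Finally, the triangle inequality against the reference bounds $\norm{x_\reft}\le\rho_r$ and $\norm{u_\reft}\le\rho_{ur}$ yields $\norm{x(\tau)}<\rho_r+\delta_1=\rho$ and $\norm{u(\tau)}<\rho_{ur}+\delta_2=\rho_u$, contradicting the choice of $\tau$; hence \eqref{eq:ch4PWCxbnd} and \eqref{eq:ch4PWCubnd} hold for all $t\geq 0$, and with them Lemma~\ref{lem::PWCxtilde} gives \eqref{eq:ch4PWCxtildebnd2} globally while the Lyapunov estimate gives \eqref{eq:ch4PWCebnd} and \eqref{eq:ch4PWCeubnd} globally. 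I expect the main difficulty to be the bookkeeping in the switched step: pinning down the precise constant in $\delta_1$ requires simultaneously separating the decay parameter $a$ from the dwell-time margin $a^\star$ so that the geometric series converges, and propagating the $\gamma_i$-forcing, which is itself a switching-reset transient bounded only on each subinterval, through the same cross-interval Lyapunov comparison.
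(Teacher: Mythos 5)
Your proposal is correct and follows essentially the same argument as the paper's proof: the same bootstrapping contradiction, the same switched Lyapunov function $V=\bar e^\top \bar P_\sigma \bar e+\nu\,\bar x_{f_2}^\top Q_\sigma \bar x_{f_2}$ with the square-completion against the $-\lambda a$ margin, and the same dwell-time geometric-series bookkeeping that produces $\delta_1$ and $\delta_2$. The only cosmetic difference is that you run the contradiction on the bounds for $(x,u)$ whereas the paper runs it on $(e,e_u)$; the two setups are interchangeable here, since each is recovered from the other by the triangle inequality against $\rho_r$ and $\rho_{ur}$.
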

\begin{proof}
Suppose that one of the bounds \eqref{eq:ch4PWCebnd} or \eqref{eq:ch4PWCeubnd} does not hold. Since $e(t)$ and $e_u(t)$ are continuous with $e(0) = 0$ and $e_u(0) = 0$, there exists a $\tau$ such that
\begin{align} \label{eq:ch4contradictme}
    \norm{e(\tau)} &= \delta_1 \text{ or }\\
    \norm{e_u(\tau)} &= \delta_2,
\end{align}
and
\begin{align*}
    \norm{e(t)} < \delta_1, \quad \norm{e_u(t)} < \delta_2, \quad \forall t \in [0, \tau).
\end{align*}
It follows then that for all $t \in [0,\tau)$, one has
\begin{align*}
    \norm{x(t)} & \leq \rho \triangleq \rho_r + \delta_1, \\
    \norm{u(t)} &\leq \rho_u \triangleq \rho_{ur} + \delta_2.
\end{align*}
Applying Lemma \ref{lem::PWCxtilde}, for all $t \in [0,\tau)$, we have
\begin{align} \label{eq:xtildebnd_lem3}
    \norm{\tilde{x}(t)} < \delta_0.
\end{align}

Let $V_i(t) = \bar{e}^\top(t) \bar{P}_i \bar{e}(t) + \nu \bar{x}_{f_2}^\top(t) Q_{i} \bar{x}_{f_2}(t)$ on the time interval $t \in [t_i,t_{i+1})$, where $\bar{e}$ is defined in \eqref{eq:bar_e_defn}, and $\nu$ comes from \eqref{eq::nudef}.
Differentiating $V_i(t)$ along the system trajectories, we have
\begin{align}
	\dot{V}_i(t) &=  \nu \bar{x}_{f_2}^\top(t) \left( \bar{F}_i^\top Q_i + Q_i \bar{F}_i \right) \bar{x}_{f_2}(t) + 2 \nu  \bar{x}_{f_2}^\top(t) Q_i \bar{G}_i \tilde{x}(t) \nonumber \\
	& \quad + \bar{e}^\top(t) \left( \bar{A}_i^\top \bar{P}_i + \bar{P}_i \bar{A}_i \right) \bar{e}(t) +  2 \bar{e}^\top(t) \bar{P}_i \bar{H}_i \bar{x}_{f_2}(t) + 2 \bar{e}^\top(t) \bar{P}_i \bar{J}_i \tilde{x}(t) + 2 \bar{e}^\top(t) \bar{P}_i \bar{\bar{B}}_i \gamma_i(t) \nonumber \\
	& \leq \begin{bmatrix} \bar{e}^\top (t) & \bar{x}_{f_2}^\top(t) & \tilde{x}^\top(t) & \gamma^\top_i(t) \end{bmatrix} \begin{bmatrix} -\lambda \bar{P}_i & \bar{P}_i \bar{H}_i & \bar{P}_i \bar{J}_i & \bar{P}_i \bar{\bar{B}}_i \\ \bar{H}_i^\top \bar{P}_i & -  \nu  \lambda Q_i &  \nu  Q_i \bar{G}_i & 0 \\ \bar{J}_i^\top \bar{P}_i &  \nu \bar{G}_i^\top Q_i & 0 & 0 \\ \bar{\bar{B}}_i^\top \bar{P}_i & 0 & 0 & 0 \end{bmatrix} \begin{bmatrix} \bar{e}(t) \\ \bar{x}_{f_2}(t) \\ \tilde{x}(t) \\ \gamma_i(t) \end{bmatrix} \nonumber \\
	& \leq - (1-a) \lambda V_i(t) + g \norm{\tilde{x}(t)}^2 + g \norm{\gamma_i(t)}^2, \label{eq::ch4Vdot}
\end{align}
where the last line follows from square completion, and the scalar $g$ is given by
\begin{align*}
	g \triangleq \norm{  \begin{bmatrix} \bar{J}_i^\top \bar{P}_i & \nu \bar{G}_i^\top Q_i \\ \bar{\bar{B}}^\top_i \bar{P}_i & 0 \end{bmatrix} \begin{bmatrix} -\lambda a \bar{P}_i & \bar{P}_i \bar{H}_i \\ \bar{H}_i^\top \bar{P}_i & -\nu \lambda a Q_i\end{bmatrix}^{-1} \begin{bmatrix}\bar{P}_i \bar{J}_i & \bar{P}_i \bar{\bar{B}}_i \\ \nu Q_i \bar{G}_i & 0 \end{bmatrix}}.
\end{align*}
Integrating the last inequality in \eqref{eq::ch4Vdot}, we have:
  \begin{align*}
       V_i(t) &\leq V_i(t_i) e^{-(1-a) \lambda (t-t_i)} + \int_{t_i}^t e^{-(1-a)\lambda (t-\tau)} g \left( \norm{\tilde{x}(t)}^2 + \norm{\gamma_i(t)}^2 \right) d\tau.
  \end{align*}
  Substituting the bound for $\norm{\gamma_i(t)}$ from \eqref{eq:ch4gammabnd} and the bound for $\norm{\tilde{x}(t)}$ from \eqref{eq:xtildebnd_lem3}, we have
  \begin{align}
      V_i(t) & < V_i(t_i) e^{-(1-a) \lambda (t-t_i)} + \int_{t_i}^t e^{-(1-a)\lambda (t-\tau)} g \left(1 + \kappa_\gamma^2 e^{2\Lambda_{\bar{F}}(t-t_i)} \right) \delta_0^2 d\tau \nonumber \\
      &< \mu V_{i-1}(t_i) e^{-(1-a)\lambda(t-t_i)} + \frac{g}{(1-a)\lambda} \left(1 + \kappa_\gamma^2 \right) \delta_0^2. \label{eq:ch4PWClyapbnd}
  \end{align}
  If at any switching time $t_i$
  \begin{align*}
      V_{i-1}(t_i) < \left( 1 - \mu^{\frac{a-a^\star}{1-a^\star}} \right)^{-1} \frac{g}{(1-a)\lambda} \left( 1 + \kappa_\gamma^2 \right) \delta_0^2,
  \end{align*}
  then it follows from \eqref{eq:ch4PWClyapbnd} and the dwell time constraint \eqref{eq:switching_stability_condition} that
  \begin{align*}
      V_i(t_{i+1}) &< \left( 1 + \mu \left( 1 - \mu^{\frac{a-a^\star}{1-a^\star}} \right)^{-1} \mu^{-\frac{1-a}{1-a^\star}}  \right) \frac{g}{(1-a)\lambda} \left( 1 + \kappa_\gamma^2 \right) \delta_0^2 \\
      &< \left( 1 - \mu^{\frac{a-a^\star}{1-a^\star}} \right)^{-1} \frac{g}{(1-a)\lambda} \left( 1 + \kappa_\gamma^2 \right) \delta_0^2~,
  \end{align*}
  and it holds for all $t_j$ with $j \geq i$.  Since $V_i(0) = 0$, it holds for all switching times.  Therefore, from \eqref{eq:ch4PWClyapbnd} we have
  \begin{align*}
      \norm{ \begin{bmatrix} \bar{e}(t) \\ \sqrt{\nu} \bar{x}_{f_2}(t) \end{bmatrix} }^2 \leq V_i(t) & < \left( \mu \left( 1 - \mu^{\frac{a-a^\star}{1-a^\star}} \right)^{-1}  e^{-(1-a)\lambda(t-t_i)} + 1 \right) \frac{g}{(1-a)\lambda} \left( 1 + \kappa_\gamma^2 \right) \delta_0^2 \\
      & < \left( \mu \left( 1 - \mu^{\frac{a-a^\star}{1-a^\star}} \right)^{-1}  + 1 \right) \frac{g}{(1-a)\lambda} \left( 1 + \kappa_\gamma^2 \right) \delta_0^2 = \delta_1^2,
  \end{align*}
  and from the definition of $e_u$ in \eqref{eq:ch4edyn}, we have
  \begin{align*}
      \norm{e_u(t)} < \max_{i \in \mathcal{I}} \norm{\begin{bmatrix} \bar{C} & \bar{L}_i/\sqrt{\nu} \end{bmatrix}} \delta_1 + \max_{i \in \mathcal{I}} \norm{D_f B_i^\dagger} \delta_0 = \delta_2.
  \end{align*}
  This contradicts \eqref{eq:ch4contradictme}, and thus \eqref{eq:ch4PWCebnd} and \eqref{eq:ch4PWCeubnd} hold.  Applying the triangle inequality then gives \eqref{eq:ch4PWCxbnd} and \eqref{eq:ch4PWCubnd}.  Finally, \eqref{eq:ch4PWCxtildebnd2} follows from Lemma \ref{lem::PWCxtilde}.  The proof is complete.
\end{proof}

To summarize, if the reference system is stable and the stated assumptions are met, Theorem \ref{thm:ebnd} guarantees that the adaptive system follows the reference system within some a priori bounds.  
If the re-initialization errors do not satisfy Assumption \ref{as::Ts}, then, as stated in Remark \ref{rem::reinit}, the effect of the re-initialization error only lasts for one time step.  
The size of the bounds in Theorem \ref{thm:ebnd} are dependent on the choice of the time step $T_s$.  
By decreasing $T_s$, under the adaptive control law in \eqref{eq:state-predictor}, \eqref{eq:adaptive-law}, \eqref{eq:l1-control-law}, the system can be made arbitrarily close to the stable reference system.

\section{Controller design for Learn-to-Fly}\label{sec:AC4L2F}
In this section, we introduce the details of the controller design for the L2F system, including how the switched \lonew controller presented in Section \ref{sec::L1AC} is incorporated.  Before proceeding, we present a table listing the nomenclature used in this section.  Any reuse of symbols in this section should not be confused with those used in the derivation and analysis of the \lonew control architecture in Section \ref{sec::L1AC}.

\begin{table}[htb]
	\centering
	\caption{Aircraft Nomenclature}
	\begin{tabular}{c l  c l} \hline\hline
		Symbol & Meaning & Symbol & Meaning  \\ \midrule
		$V$  & Total Airspeed &  $L$ & Lift Force    \\
		$\alpha$  & Angle of Attack  & $Y$ & Side Force  \\ 
		$\beta$  & Angle of Sideslip  & $\phi$  & Roll Angle    \\
		$p$  & Roll Rate &  $\theta$  & Pitch Angle  \\
		$q$  & Pitch Rate   & $\chi$  & Ground Track Angle   \\ 
		$r$  & Yaw Rate  &  $\gamma$ & Flight Path Angle   \\
		$\omega$ & Angular Rate Vector $[p\, q\, r]^\top$ & $\mu$ & Bank Angle  \\
		$M$ & External Aerodynamic Moment Vector   &  $m$ & Vehicle Mass \\ 
		$M_\delta$ & Moment Produced by Control Surfaces  &  $I$ & Inertia Matrix \\
		$g$ &  Acceleration Due to Gravity  &  & \\ \hline\hline
	\end{tabular}
\end{table}

\subsection{Mathematical model}
To begin, we define the mathematical model used to develop the control law.  We start with the standard equations of motion for a rigid-body aircraft \cite{Snell} with no thrust: 
\begin{equation} \label{eq:aircraft_dynamics}
\begin{split}
	\dot{\chi} &= \frac{1}{mV\cos(\gamma)} \left(  L \sin(\mu) + Y \cos(\mu)\cos(\beta) \right), \\
	\dot{\gamma} &= \frac{1}{mV} \left( L \cos(\mu) - Y \sin(\mu) \cos(\beta) - mg \cos(\gamma) \right) , \\
	\dot{\phi} &= p + \tan(\theta) \left( q \sin(\phi) + r \cos(\phi) \right), \\
	\dot{\beta} &= -\cos(\alpha)r+\sin(\alpha)p + \frac{1}{mV} \left( Y \cos(\beta) + mg \cos(\gamma)\sin(\mu) \right), \\
	\dot{\alpha} &= q-\tan(\beta) \left( \cos(\alpha)p + \sin(\alpha) r \right)  + \frac{1}{mV\cos(\beta)} \left( -L + mg\cos(\gamma)\cos(\mu) \right), \\
	\dot{\theta} &= q \cos(\phi) - r \sin(\phi), \\
	\dot{\omega} &= I^{-1} \left( M + M_\delta \right) - I^{-1} \left( \omega \times I\omega \right).
\end{split}
\end{equation}

If we assume small angles ($\alpha$, $\beta$, $\gamma$) and small rates ($p$, $r$) with negligible side force ($Y=0$) and steady 1-g flight ($L = mg/\cos(\phi)$),
the dynamics in \eqref{eq:aircraft_dynamics} can be simplified as 
\begin{align}
	\begin{split}
		\dot{\chi} &= \frac{g}{V}\tan(\phi), \\
		\dot{\gamma} &= 0, \\
		\dot{\phi} &= p + \tan(\theta) \left( q \sin(\phi) + r \cos(\phi) \right), \\
		\dot{\beta} &= -r + \frac{g}{V}\sin(\phi),  \\
		\dot{\alpha} &= q  - \frac{g}{V} \sin(\phi) \tan(\phi), \\
		\dot{\theta} &= q \cos(\phi) - r \sin(\phi), \\
		\dot{\omega} &= I^{-1} \left( M + M_\delta \right) - I^{-1} \left( \omega \times I\omega \right).
	\end{split}
	\label{eq::SimplifiedDynamics}
\end{align}
This simplified model was used to develop the control laws.  
Originally, the baseline control law was designed for an unpowered glider-type aircraft. 
In addition, during these tests, the real-time modeling software was not able to separate the thrust effects from the aerodynamic effects.  
For these reasons, despite the test vehicle having thrust, the control law does not consider its effects.  
The velocity was loosely controlled by a simple proportional-integral law feeding the velocity error back to the throttle.  
The feedback loop was unknown to the controller and can be thought of as unmodeled dynamics.
As an example, if we look at the angle of attack dynamics in the presence of thrust $T$, we can see how the interaction might skew the aerodynamic model:
\begin{align*}
	\dot{\alpha} = & \, q - \tan(\beta) \left( \cos(\alpha)p + \sin(\alpha) r \right) + \frac{1}{mV\cos(\beta)} \left( -L + mg \cos(\gamma) \cos(\mu) - T_x \sin(\alpha) + T_z \cos(\alpha) \right).
\end{align*}
Assuming the thrust is axially aligned (i.e., $T_z = 0$), the modeled lift force will contain both the actual lift force $L$ and a thrust component $T_x \sin(\alpha)$. With the thrust loop closed based on the velocity, the modeled lift force now contains an artificial dependence on speed, angle of attack, and the dynamics of motor thrust.  

Since this initial flight testing, the capabilities of the real-time aerodynamic modeling have expanded to capture thrust effects by including modeling terms based on the advance ratio (similar to the use of corrected engine rotational speed in \cite{Brandon2016}).  
Unfortunately, further flight tests examining the corresponding effect on the performance of the baseline and adaptive controllers have not been performed.

\subsection{Baseline controller design}
As mentioned in Section \ref{sec:l2f-overview}, the baseline controller is a nonlinear dynamic inversion (NDI) controller.  To apply NDI to the dynamics in \eqref{eq::SimplifiedDynamics}, we sequentially generate commands based on the time-scale separation of the variables.  
The first loop converts the guidance commands of ground track angle $\chi$ and flight path angle $\gamma$ into attitude commands for the roll angle $\phi$ and pitch angle $\theta$, respectively.  
The desired derivatives are chosen to be proportional to the error between the variables and its command in order to produce a linear tracking response.  
Inverting the $\chi$ dynamics from \eqref{eq::SimplifiedDynamics} in this manner produces a roll command given by
\begin{align*}
	\phi_{cmd} &= \sat{-45^\circ}{+45^\circ}{\arctan\left(\frac{V}{g} K_\chi (\chi_{cmd} - \chi) \right)},
\end{align*}
where we have additionally limited the roll command between $\pm 45^\circ$.
Since the flight path angle dynamics in \eqref{eq::SimplifiedDynamics} are constant, we use a kinematic relationship between flight path angle $\gamma$ and pitch angle $\theta$ from the $z$-component of the velocity vector to directly solve for a commanded pitch angle:
\begin{align*}
	\sin(\gamma_{cmd}) = a_1 \cos(\theta_{cmd}) + a_2 \sin(\theta_{cmd}),
\end{align*}
where $a_1 = -\cos(\alpha) \cos(\beta)$ and $a_2 = \sin(\phi) \sin(\beta) + \cos(\phi) \sin(\alpha) \cos(\beta)$ \cite{StevenLewis}.

The second loop converts the angle commands into angular rate commands, with $\phi$, $\theta$, and $\beta$ being mapped to commands for $p$, $q$, and $r$, respectively.  Once again seeking a linear tracking response, from \eqref{eq::SimplifiedDynamics}, the commands are
\begin{align*}
	p_{cmd} &= K_\phi (\phi_{cmd} - \phi) - \tan(\theta) \left( q \sin(\phi) + r \cos(\phi) \right),\\
	q_{cmd} &= \frac{1}{\cos(\phi)}\left(  K_\theta (\theta_{cmd} - \theta) + r \sin(\phi) \right), \\
	r_{cmd} &= -K_\beta (\beta_{cmd}-\beta) - \frac{g}{V}\sin(\phi).  
\end{align*}

The final loop converts these angular rate commands into moment commands:
\begin{align*}
	M_{\delta, cmd} &= I K_\omega (\omega_{cmd}-\omega)  -  (\hat{M} - \omega \times I\omega),
\end{align*}
where $K_\omega = \text{diag}([K_p, K_q, K_r])$, $\omega_{cmd} = [p_{cmd}, q_{cmd}, r_{cmd}]^\top$, and an estimate of the current moment on the aircraft $\hat{M}$, as determined by the real-time aerodynamic modeling discussed in Section \ref{sec:l2f-overview} and detailed in \cite{L2FModeling}, is used in place of a measurement.  

\subsection{Model and inflight control law update from online learning}
The control parameters $K_\theta$, $K_q$, $K_\beta$, $K_r$, $K_\phi$, and $K_p$ were selected, based on the updated model, to maintain the natural frequency of the vehicle while supplying sufficient damping.  Assuming the dynamic inversion is capable of removing the nonlinear effects, which would leave only the desired linear dynamics, the longitudinal dynamics would look like
\begin{align}
	\begin{bmatrix} \dot{\theta} \\ \dot{q} \end{bmatrix} &= \underbrace{\begin{bmatrix} 0 & 1 \\ -K_q K_\theta & -K_q \end{bmatrix}}_{A_\sigma} \begin{bmatrix} \theta \\ q \end{bmatrix} + \underbrace{\begin{bmatrix} 0 \\ K_q K_\theta \end{bmatrix}}_{B_\sigma} \theta_{cmd}, \label{eq:experiment_des_dynamics}
\end{align}
which is a canonical description of a second-order system.  The control parameters can then be written in terms of the natural frequency $\omega_n$ and damping $\zeta$ of the second-order system
\begin{align*}
	K_q &= 2 \zeta \omega_n \\
	K_\theta &= \omega_n / 2\zeta.
\end{align*}
A similar description can be given for the roll and yaw dynamics.  

The desired natural frequency was determined based on the linearized real-time modeling results, as described in Table \ref{tab::desiredFreq}, and the desired damping ratio was selected to be $0.8$.  
The non-linear, global non-dimensional model was updated at a rate of 5 Hz by the real-time modeling.  The linearization of the non-dimensional model was then computed based on the desired operating point.  From Table \ref{tab::desiredFreq}, we see that the desired natural frequencies (desired dynamics) were parameterized by the dynamic pressure $\bar{q}$ during dimensionalization.  Given the 5 Hz update rate, we reasonably assumed that the dynamic pressure remains constant over that interval; thus, we consider the system to be comprised of switched linear systems as discussed in Section \ref{sec::L1AC}.  For comparison, the flight control law implementation ran at a rate of 50 Hz.  

To verify the stability of the switched system \cref{eq:experiment_des_dynamics} characterizing the desired dynamics, we need to check whether Assumption~\ref{as:idsysstab} holds. For verification,
we can constrain $\omega_n$ in $\Omega=[0.5,20]$ rad/s, which is sufficiently large to cover all possible scenarios. Suppose we divide $\Omega$ to four subsets, namely, $\Omega^1=[0.5,0.7]$, $\Omega^2=[0.7,1.6]$, $\Omega^3=[1.6,5]$ and $\Omega^4=[5,20]$, and assume that $\omega_n$ will not jump between non-adjacent subsets, e.g., $\Omega^1$ and $\Omega^3$. 
Then, we could find four constant matrices, namely,
\begin{equation}
    P^1= \begin{bmatrix}
           68.89 & 100.3\\ 100.3 &  235.7
    \end{bmatrix},\ 
P^2= \begin{bmatrix}
           99.64 & 50.35\\ 50.35 & 65.42
    \end{bmatrix},\ 
P^3= \begin{bmatrix}
       140.63 & 26.91\\ 26.91 & 17.25
\end{bmatrix},\ 
P^4= \begin{bmatrix}
      452.5 & 24.15\\ 24.15&4.67
\end{bmatrix},
\end{equation}
with $P^i$ corresponding to $\Omega^i$, 
together with $\mu=4.5$ and $\lambda=0.7$,
such that the condition in Assumption~\ref{as:idsysstab} holds, thereby guaranteeing the stability of \cref{eq:experiment_des_dynamics}, as long as the dwell time of the switching signal is larger than $\tau_a = \ln(4.5)/0.7=2.15$ second.
However, if we consider a dwell time of 0.2 second for the switching signal (determined by the model update frequency of 5 Hz), 
we were not able to find matrices $P_i$ that satisfy the condition \cref{eq:assump-P-lya-stab} in Assumption~\ref{as:idsysstab} , which indicates that the stability of \cref{eq:experiment_des_dynamics} under a dwell time of 0.2 second cannot be theoretically verified. The lack of
theoretical verification for the stability of the desired dynamics used in the flight tests could be attributed to a few factors. 
First, the range for the desired natural frequency $\omega_n$, selected as $[0.5,20]$ to cover all possible scenarios,  could be overly large. 
Second, the condition \cref{eq:assump-P-lya-stab} in Assumption~\ref{as:idsysstab} is merely a sufficient
condition for stability guarantee (based on the Lyapunov theory) and could be quite conservative in practice. 
Third, the condition \cref{eq:assump-P-lya-stab} in Assumption~\ref{as:idsysstab} guarantees the stability under {\it continual} switching, while the desired dynamics in the flight tests will converge in a few seconds after the learning starts (as shown in \cref{fig:F11Pilot,fig:F10Angles,fig:F11Angles}), and no switching will happen afterward.
As noted in Remark~\ref{rem:idsysstab}, the conditions of Assumption~\ref{as:idsysstab} could be replaced with average dwell time conditions, which are easier to satisfy.

\begin{table}
	\renewcommand{\arraystretch}{2.0}
	\caption{Desired frequency calculations}
	\centering
	\begin{tabular}{  l  c  }
		\hline\hline
					
		Degree of freedom & Equation \\
		\hline
		\hspace{0.25in} Roll \hspace{0.25in}  &  \hspace{0.25in} $\omega_n = \sqrt{ \left| \frac{\bar{q}Sb}{2 I_{xx}} C_{l_{\delta_a}} \right| }$ \hspace{0.25in} \\  
		\hspace{0.25in} Pitch \hspace{0.25in} &  \hspace{0.25in} $\omega_n = \sqrt{ \left| \frac{\bar{q}S\bar{c}}{ I_{yy}} C_{m_{\alpha}} \right| }$ \hspace{0.25in}   \\  
		\hspace{0.25in} Yaw \hspace{0.25in} &  \hspace{0.25in} $\omega_n = \sqrt{ \left| \frac{\bar{q}Sb}{ I_{zz}} C_{n_{\beta}} \right| } $ \hspace{0.25in}    \\  \hline \hline
	\end{tabular}
	\label{tab::desiredFreq}
\end{table}

\subsection{\lonew adaptive augmentation}

The adaptive portion of the controller is used to assist the control loop regulating the angular rates.  
The desired dynamics are defined as the dynamic inversion component successfully canceling the nonlinearities and enforcing the ideal linear response: 
	\begin{align*}
		\dot{\omega}(t) = -K_\omega \omega(t) + K_\omega \omega_{cmd}(t).
	\end{align*}							   
This can be represented in the form of \eqref{eq:ideal-dynamics}:
\begin{align*}
	\dot{x}_\textup{id}(t) &= A_{\sigma} x_\idt(t) + B_{\sigma} u_\idt(t), \quad x_\idt(0) = x_0, \\
	u_\idt(t) &= k_{\sigma} r(t), \quad y_\idt(t) = C_{\sigma} x_\idt(t),
\end{align*}
where we now transition back into the adaptive control nomenclature with the state $x$ representing the angular velocity $\omega$, and the reference $r$ representing the commanded angular velocity $\omega_{cmd}$.  The system matrices $A_{\sigma}$ and $B_{\sigma}$ are given by $-K_\omega$ and $\II$, respectively, and $k_{\sigma}$ and $C_{\sigma}$ are equal to $K_\omega$ and $\II$, respectively.  The values for $K_\omega$ are updated (switched) based on the real-time modeling results, as discussed in the previous subsection.

Due to the simplifying assumptions, model uncertainties, external disturbances, etc., the dynamic inversion controller will not be able to exactly achieve its goal.  
We thus arrive at the uncertain dynamics considered in \eqref{eq:sysdyn} and apply the adaptive controller discussed in Section \ref{sec::L1AC}.

\section{Flight test results}\label{sec:flight-test}
The flight tests described here were performed by NASA Langley Research Center on the E1 flight test vehicle, shown in Figure \ref{fig:E1}.  The vehicle is able to be flown in three modes: safety mode, pilot command pass-through mode, and full autonomy mode. In the safety mode, the vehicle is remotely piloted in a stick-to-surface fashion.  This allows the R/C pilot to takeoff and land the vehicle, as well as to intervene at any time in order to preserve the safety of the vehicle, test crew, or airspace.  The pilot command pass-through mode allows signals to be injected on top of the R/C pilot commands.  In the nominal flight configurations, this allows the introduction of the PTIs during piloted operation.  For the destabilized configurations, it is also used to introduce a destabilizing feedback loop to redundant control surfaces.  For more details on the destabilization and the modes in general, see \cite{L2FOps}.  Finally, the full autonomy mode engages the onboard guidance, modeling, and control algorithms with the injected PTIs and any destabilizing feedback inputs.  The pilot has no control of the vehicle in this mode.  

\begin{figure}[htbp]
	\centering
	\includegraphics{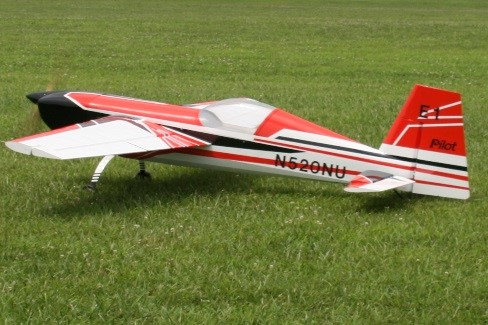}
	\caption{E1 Flight Test Vehicle}
	\label{fig:E1}
\end{figure}

A brief summary of the what was varied between flights is included in Table \ref{tab:FlightTestSummary}, along with any corresponding figure numbers.  A more detailed description of the flights and discussion of the flight data is provided in the next two subsections.  

\begin{table} [htb]
	\centering
	\caption{Flight Test Summary} \label{tab:FlightTestSummary}
	\begin{tabular}{l l l l}
		Configuration & Controller & Learning & Figures \\ \hline
		Nominal Dynamics & NDI Only & Sequential & Fig. \ref{fig:F2Angles} \\
		Nominal Dynamics & NDI+\lonew & Sequential & Fig. \ref{fig:F3Angles}, \ref{fig:F3Stall} \\
		Pitch Destabilized (-10\% static margin) & NDI+\lonew & Simultaneous & No Fig. \\
		Pitch Destabilized (-16.4\% static margin) & NDI+\lonew & Simultaneous & Fig. \ref{fig:F11Angles}, \ref{fig:F11uAd}, \ref{fig:F11Pilot} \\
		Roll Destabilized & NDI+\lonew & Simultaneous & Fig. \ref{fig:F10Angles}
	\end{tabular}
\end{table}

\subsection{Control performance in nominal flight conditions}
In the first flight test experiment we examined the performance of the NDI controller with and without the \loneAC ~law.  This was a qualitative comparison of the overall L2F system with and without adaptive element of the controller in nominal conditions when there are no or small modeling errors.  Due to the different real-time modeling results that were available to the two controllers during their respective flights, a true apples-to-apples comparison of the control methods cannot be made with the collected data.  Instead, a carefully designed future experiment would be required to judge the performance of each control methodology.  

\emph{Without \lonew compensation}:  The first flight of the experiment began with an R/C takeoff.  The R/C pilot engaged the learning mode with the PTIs once reaching the appropriate altitude.  Data were collected over three laps of a racetrack pattern, two near the target speed and one at a lower speed.  The PTIs were disengaged and the pilot performed pitch, yaw, and roll doublets.  From here, the autopilot with only the NDI controller (i.e., no adaptive component) was given the control authority.  There were no significant transients during this handoff.  The vehicle began following the waypoints in the racetrack pattern.  After demonstrating sufficient control of the vehicle, the autopilot was disengaged and reengaged multiple times at various headings, bank angles, and speeds.  The results were similar for each case.  Finally, the vehicle was returned to R/C mode for landing.  

For one of the passes, Figure \ref{fig:F2Angles} shows the system states in blue and the commanded values in red.  Good tracking performance can be seen in both the roll and pitch channels, but the yaw channel shows a small bias, especially in the sideslip angle $\beta$.  
Note that here, and throughout, we report the vehicle speed as the calibrated airspeed $V_{CAS}$.

\begin{figure}[htbp]
	\centering
	\includegraphics[trim={0.in 0in 0.4in 0.2in},clip,width = 0.32\linewidth]{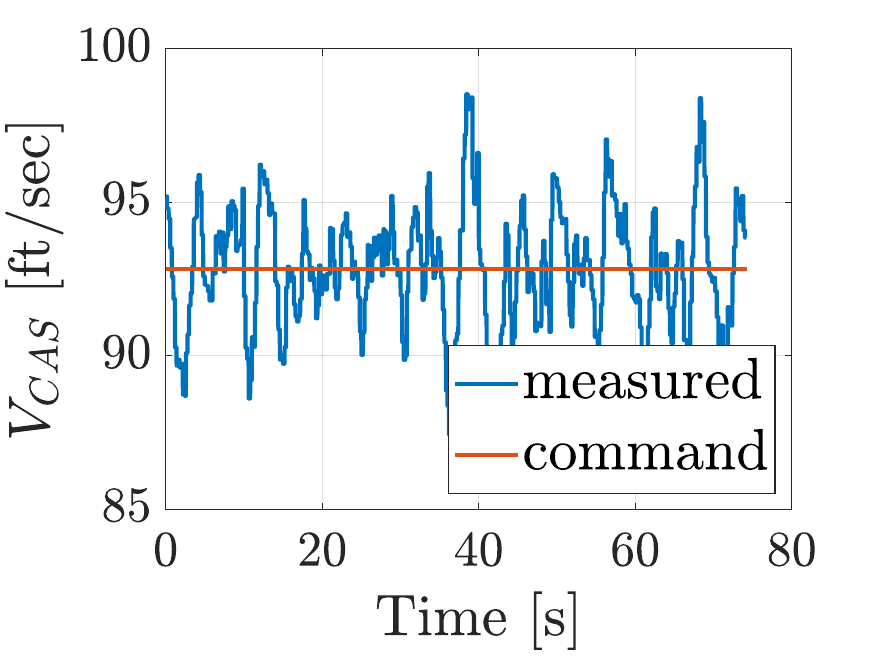}
	\includegraphics[trim={0.in 0in 0.4in 0.2in},clip,width = 0.32\linewidth]{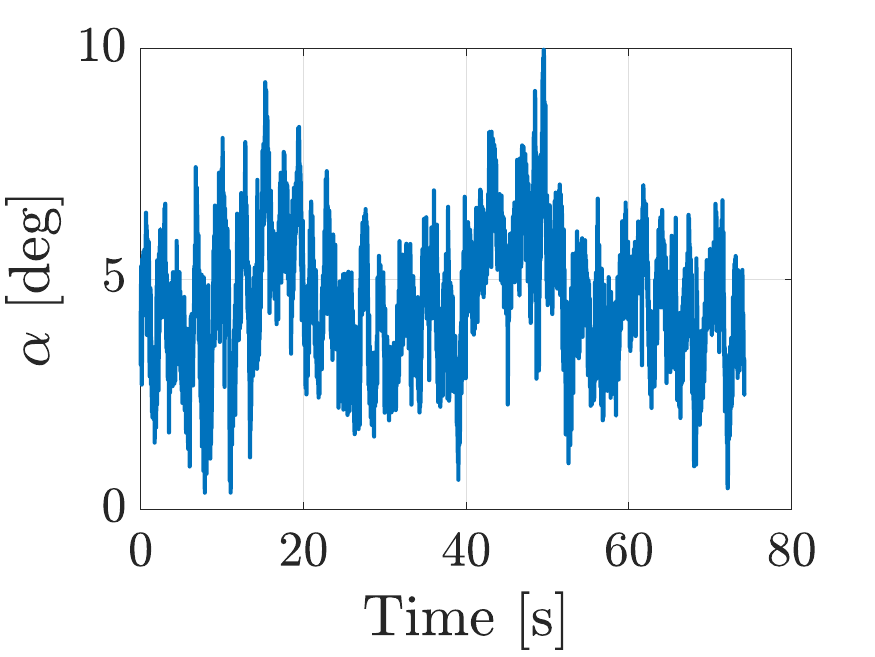}
	\includegraphics[trim={0.in 0in 0.4in 0.2in},clip,width = 0.32\linewidth]{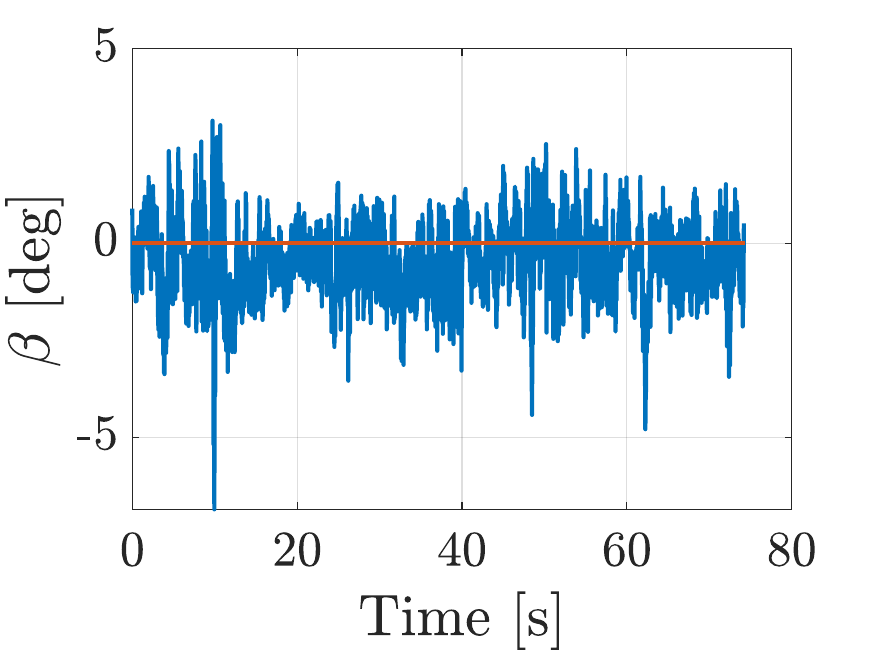}
	\includegraphics[trim={0.in 0in 0.4in 0.2in},clip,width = 0.32\linewidth]{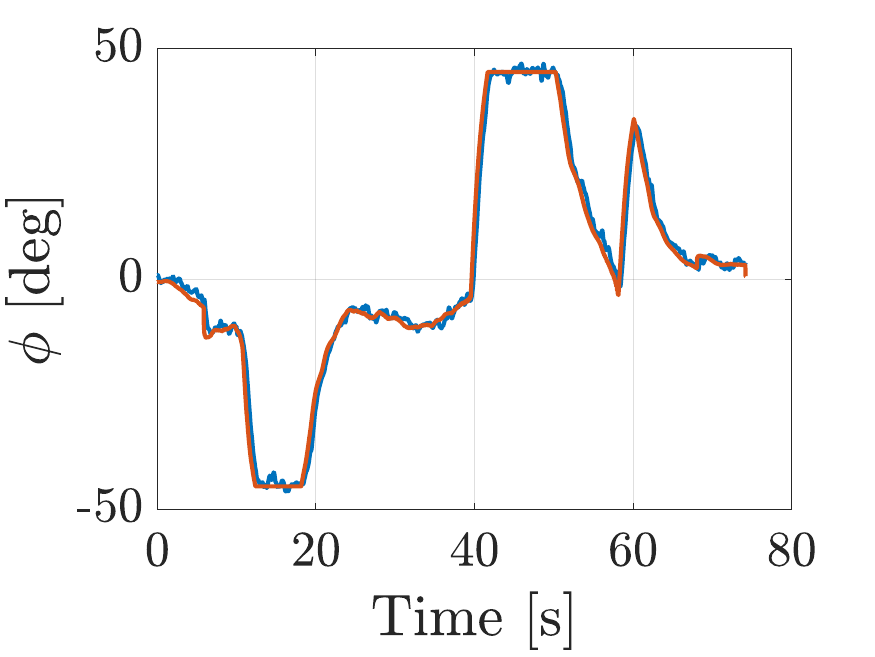}
	\includegraphics[trim={0.in 0in 0.4in 0.2in},clip,width = 0.32\linewidth]{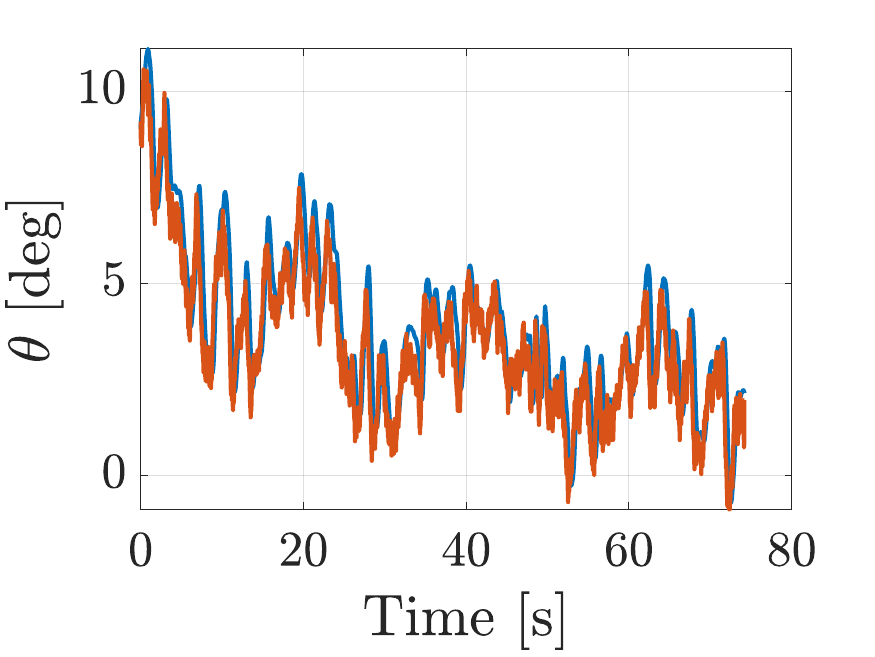}
	\includegraphics[trim={0.in 0in 0.4in 0.2in},clip,width = 0.32\linewidth]{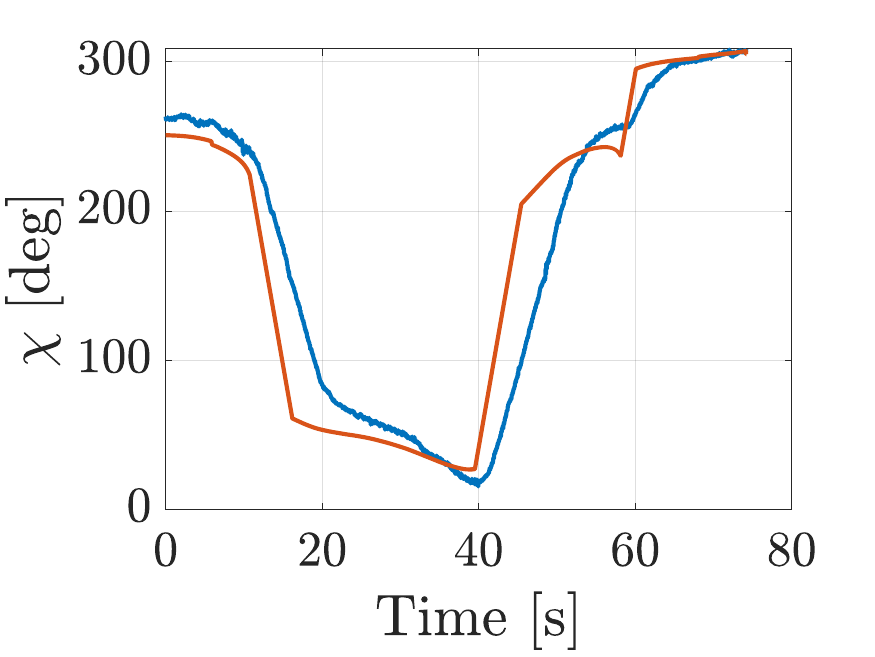}
	\includegraphics[trim={0.in 0in 0.4in 0.2in},clip,width = 0.32\linewidth]{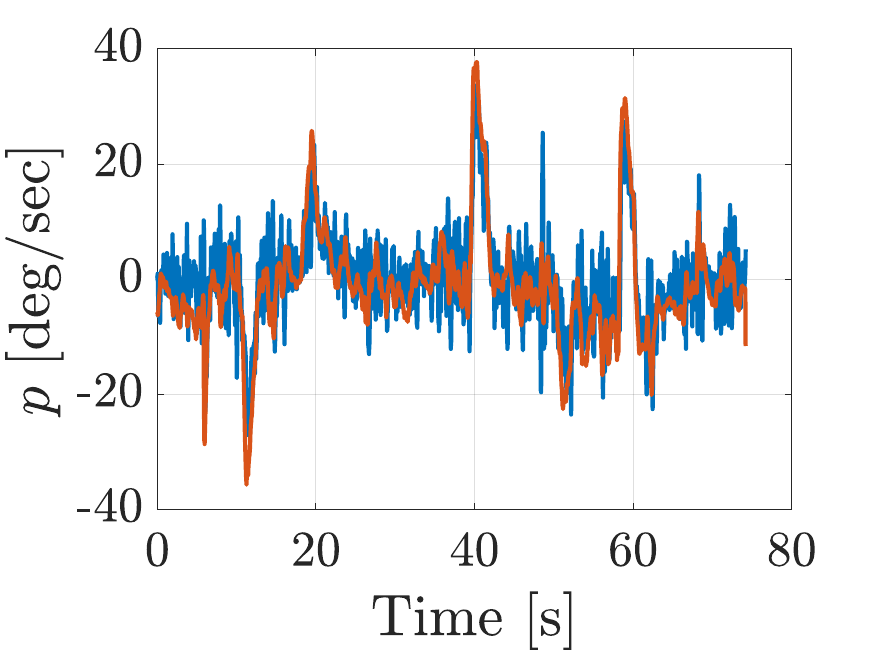}
	\includegraphics[trim={0.in 0in 0.4in 0.2in},clip,width = 0.32\linewidth]{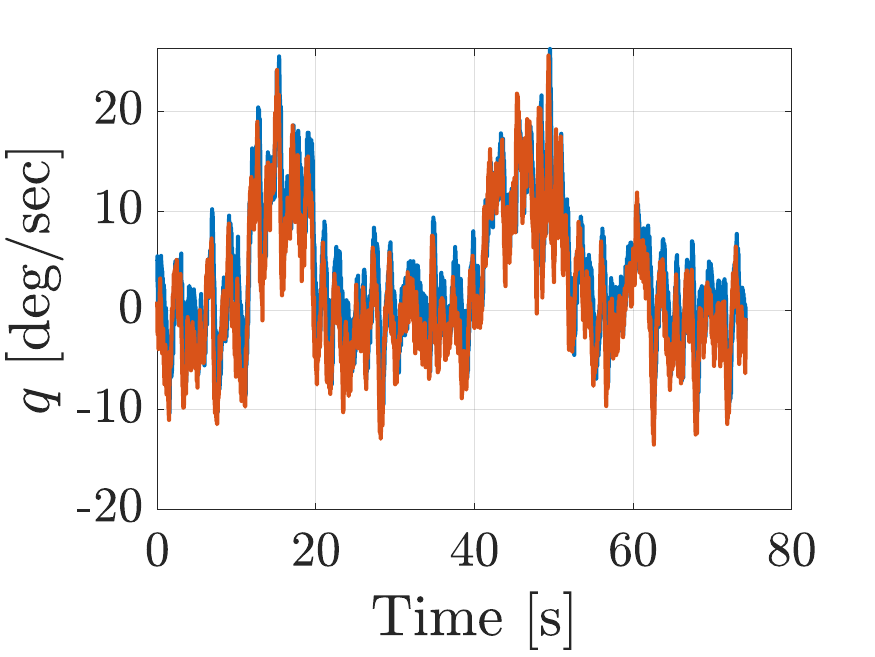}
	\includegraphics[trim={0.in 0in 0.4in 0.2in},clip,width = 0.32\linewidth]{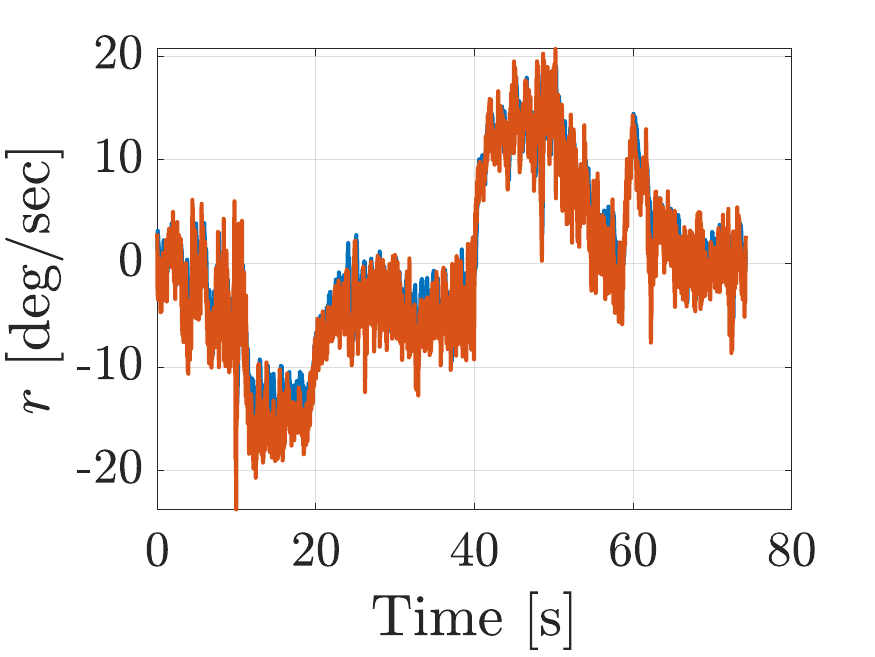}
	\caption{Non-adaptive NDI Controller - Nominal Trajectory}
	\label{fig:F2Angles}
\end{figure}

\emph{With \lonew compensation}:  The second flight of the experiment was similar to the first.
It began with an R/C takeoff, after which, the learning mode was engaged with PTIs for a lap at three different speeds, low, medium and high.  
After disabling the PTIs, pitch, yaw, and roll doublets were performed.  
This time, when autopilot was engaged, the controller became the NDI controller with the adaptive component.  
Once again there were no significant transients after the switch, and the vehicle began navigating as expected.
The tracking performance overall was good, as seen in Figure \ref{fig:F3Angles}, but a small offset can be seen in the pitch channel, e.g. $\theta$ and $q$.  
Similar results were obtained after engaging the autopilot in numerous other conditions, including various roll and heading angles, at a high altitude with slow speed, and a low altitude with high speed.
Satisfied with the performance, the pilot regained control with the PTIs active and flew several approaches to stall.  The autopilot was then engaged after an approach to stall.  As shown in Figure \ref{fig:F3Stall}, it recovered the vehicle as expected and navigated to its waypoints.  
Finally, the vehicle was returned to R/C mode for landing.

\begin{figure}[htbp]
	\centering
	\includegraphics[trim={0.in 0in 0.4in 0.2in},clip,width = 0.32\linewidth]{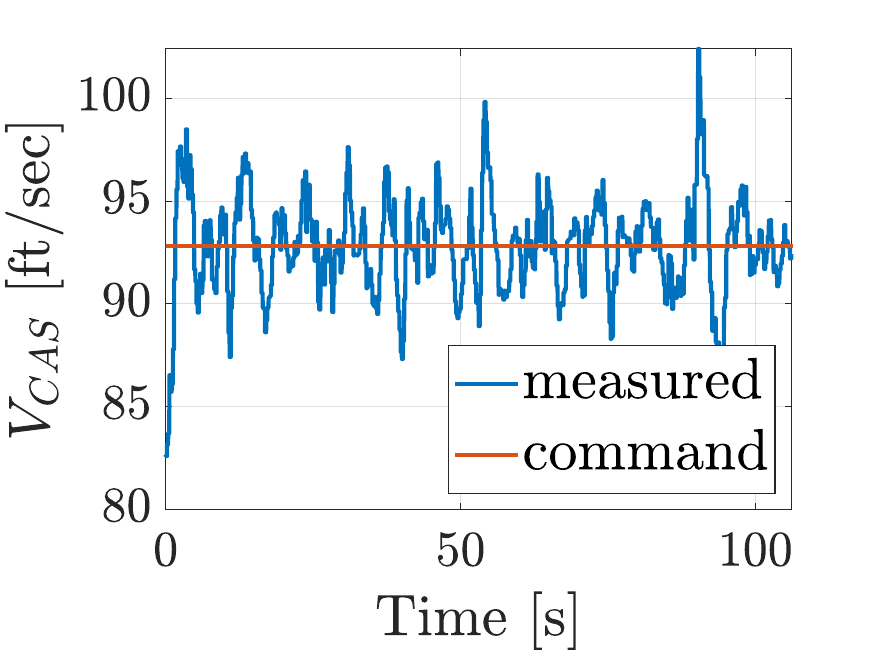}
	\includegraphics[trim={0.in 0in 0.4in 0.2in},clip,width = 0.32\linewidth]{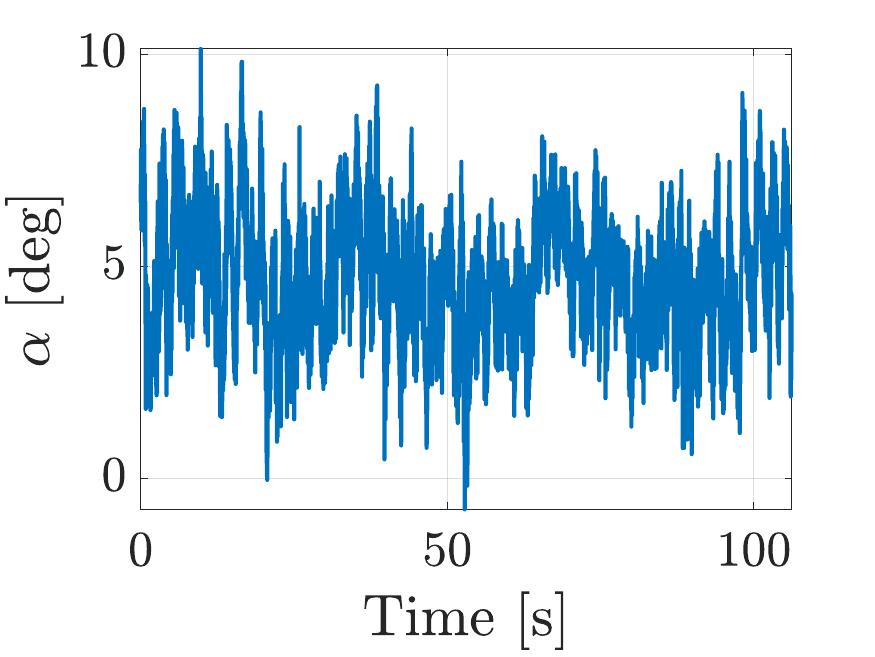}
	\includegraphics[trim={0.in 0in 0.4in 0.2in},clip,width = 0.32\linewidth]{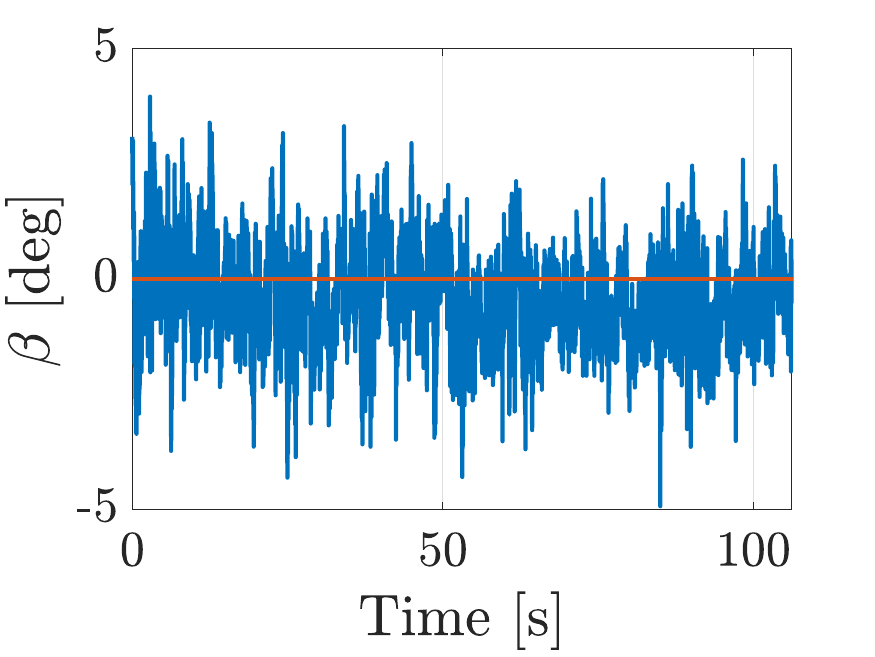}
	\includegraphics[trim={0.in 0in 0.4in 0.2in},clip,width = 0.32\linewidth]{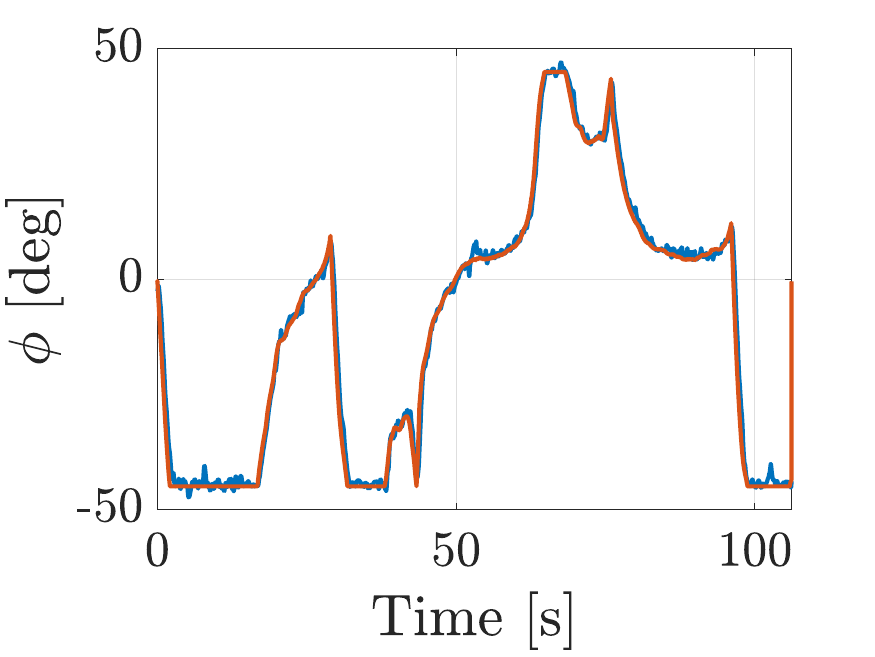}
	\includegraphics[trim={0.in 0in 0.4in 0.2in},clip,width = 0.32\linewidth]{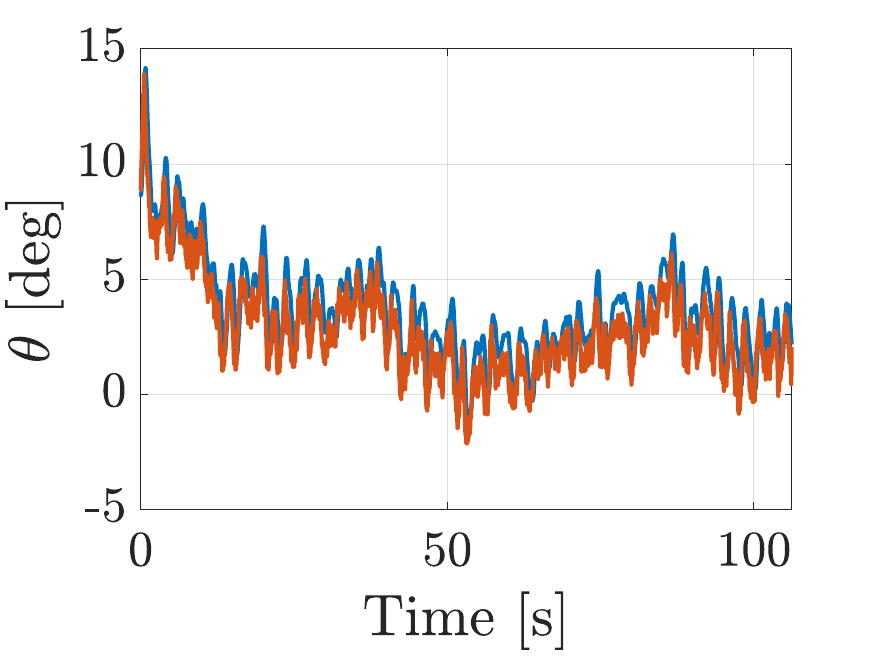}
	\includegraphics[trim={0.in 0in 0.4in 0.2in},clip,width = 0.32\linewidth]{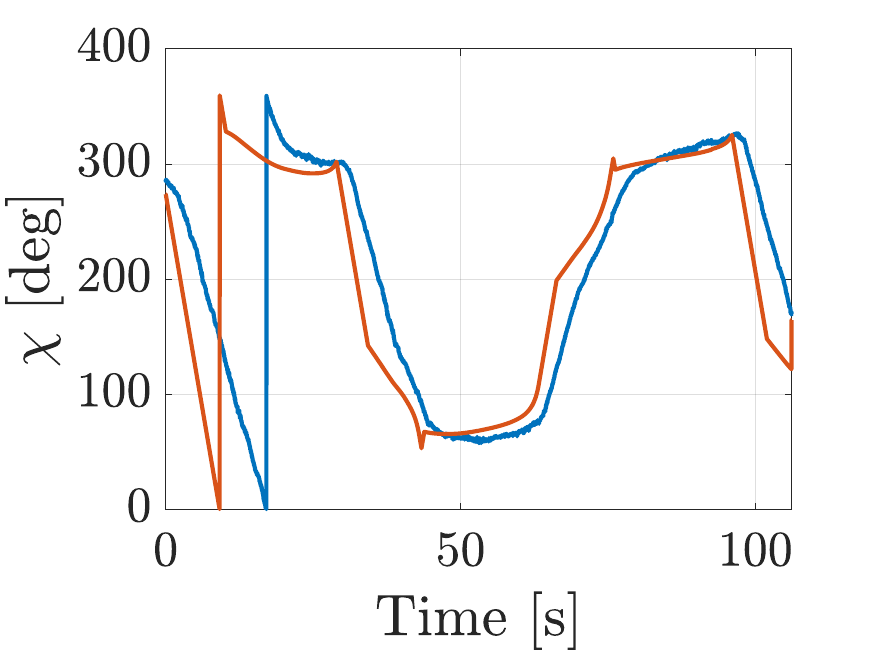}
	\includegraphics[trim={0.in 0in 0.4in 0.2in},clip,width = 0.32\linewidth]{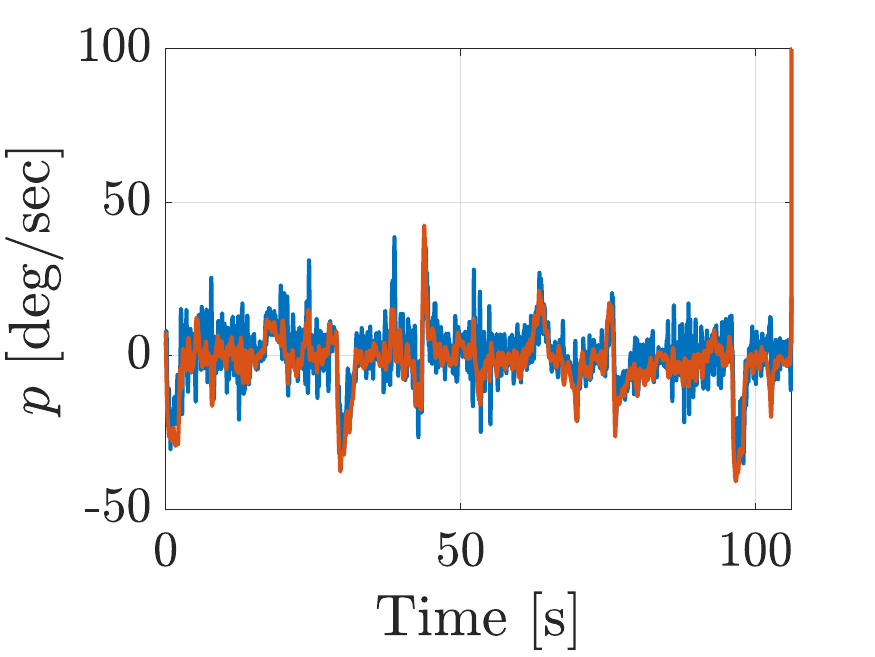}
	\includegraphics[trim={0.in 0in 0.4in 0.2in},clip,width = 0.32\linewidth]{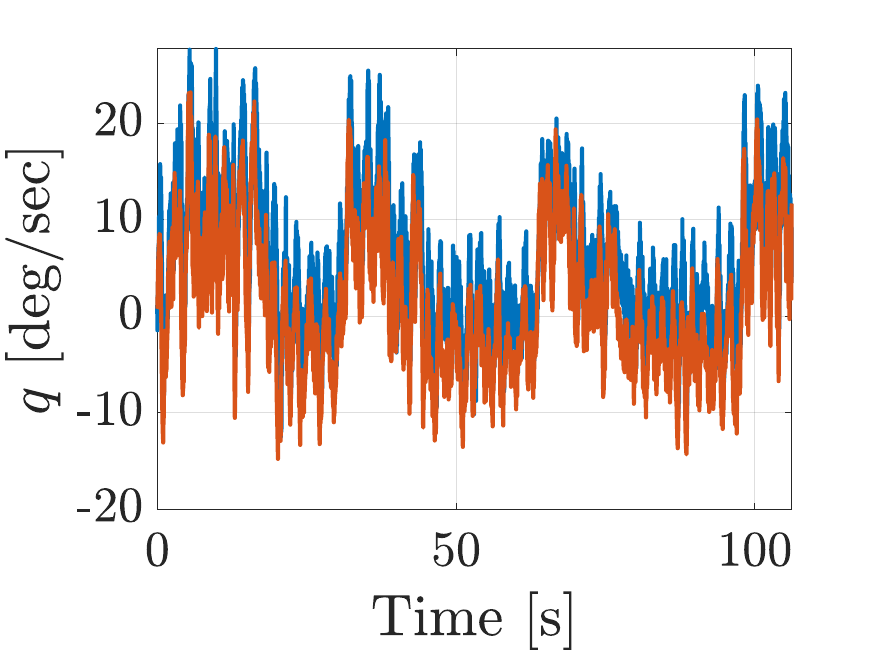}
	\includegraphics[trim={0.in 0in 0.4in 0.2in},clip,width = 0.32\linewidth]{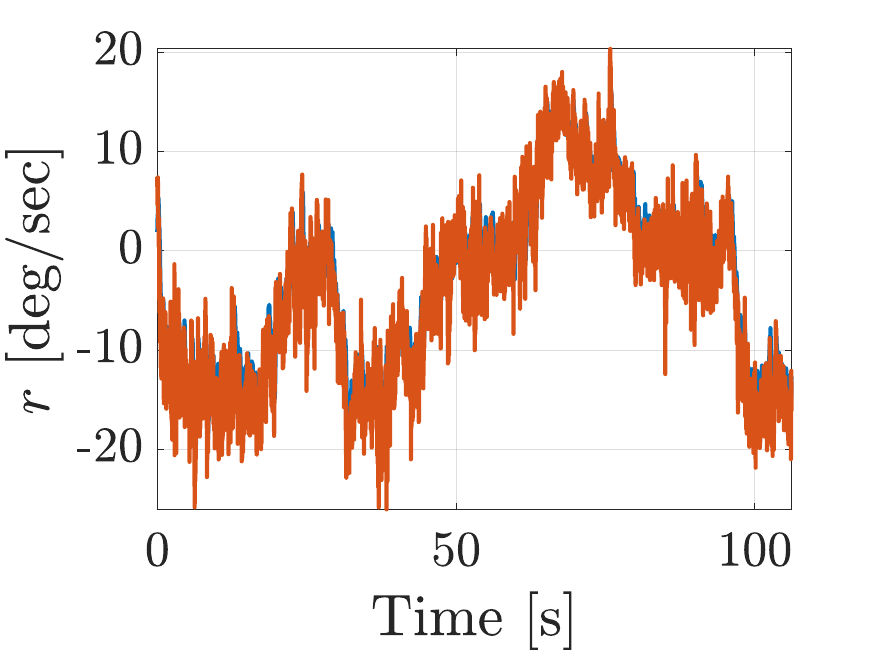}
	\caption{Adaptive NDI Controller - Nominal Trajectory}
	\label{fig:F3Angles}
\end{figure}

\begin{figure}[htbp]
	\centering
	\includegraphics[trim={0.in 0in 0.4in 0.2in},clip,width = 0.32\linewidth]{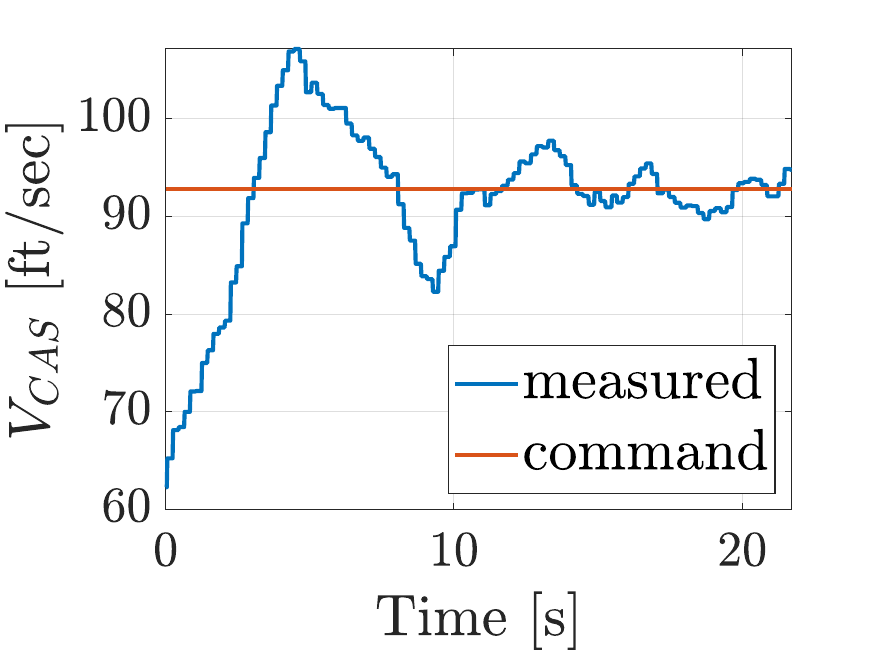}
	\includegraphics[trim={0.in 0in 0.4in 0.2in},clip,width = 0.32\linewidth]{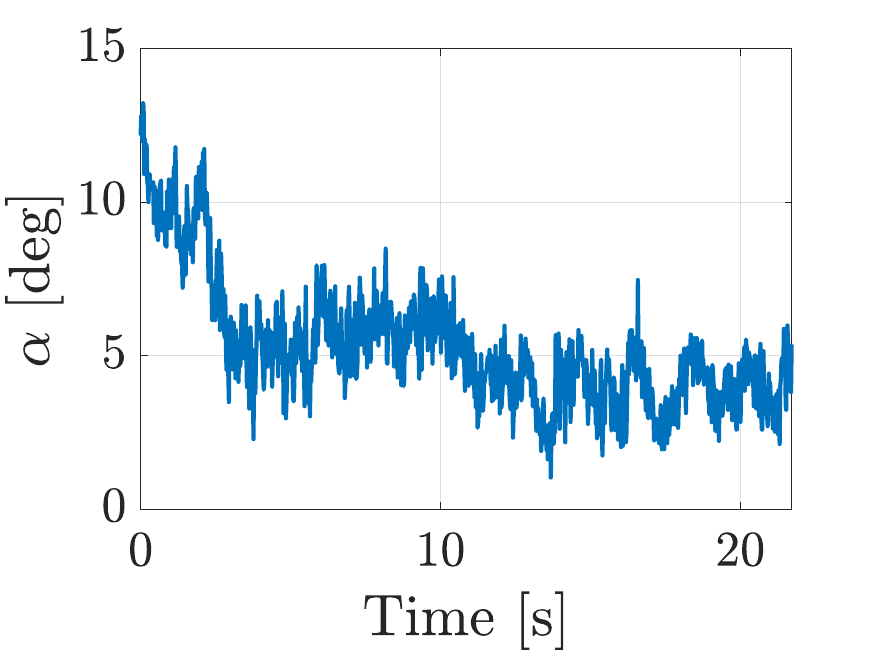}
	\includegraphics[trim={0.in 0in 0.4in 0.2in},clip,width = 0.32\linewidth]{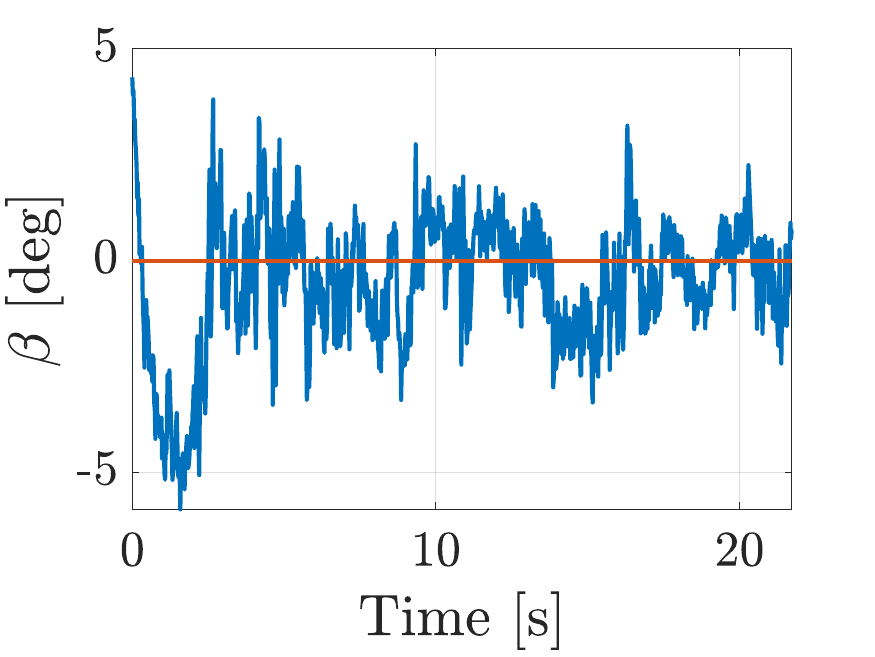}
	\includegraphics[trim={0.in 0in 0.4in 0.2in},clip,width = 0.32\linewidth]{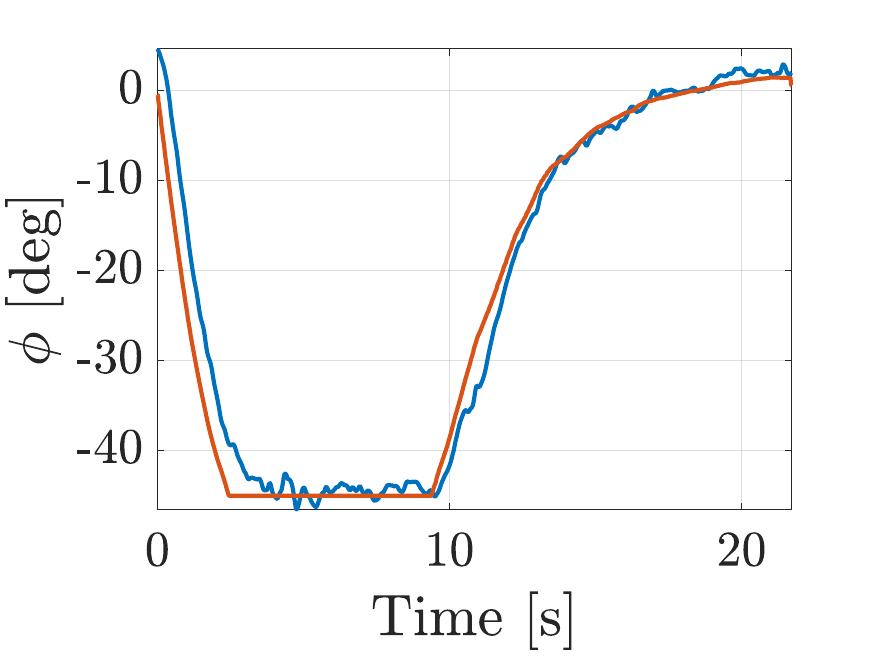}
	\includegraphics[trim={0.in 0in 0.4in 0.2in},clip,width = 0.32\linewidth]{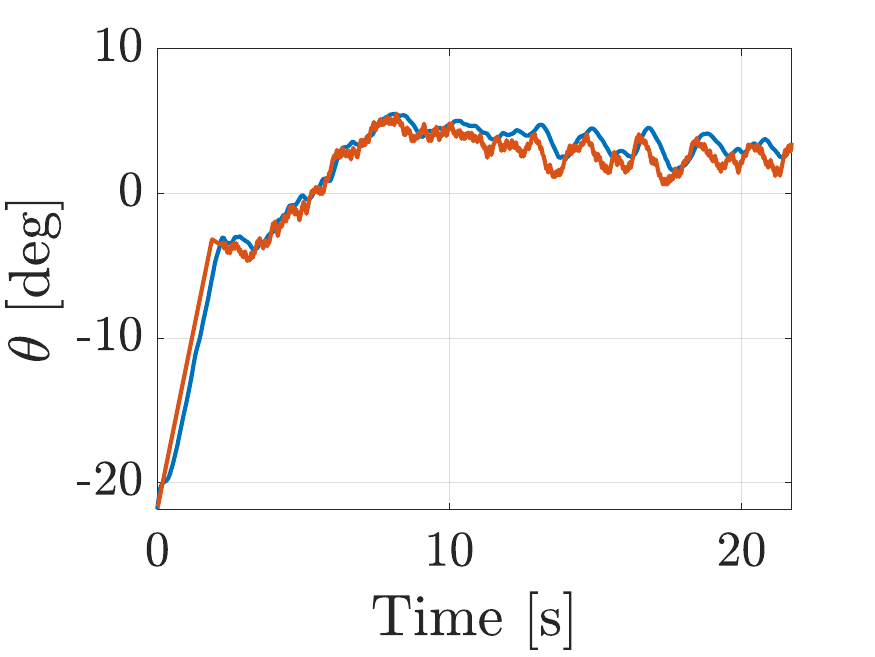}
	\includegraphics[trim={0.in 0in 0.4in 0.2in},clip,width = 0.32\linewidth]{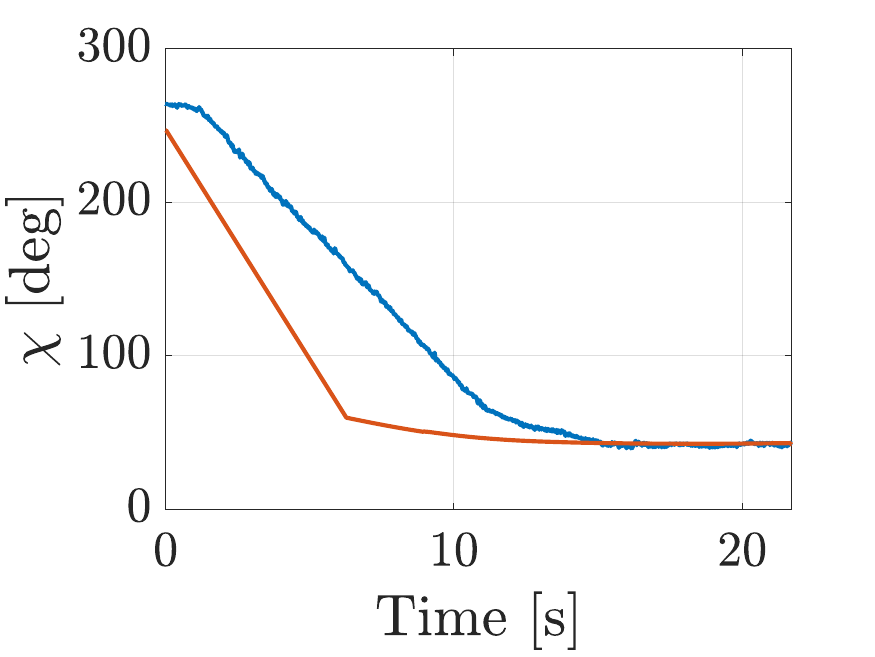}
	\includegraphics[trim={0.in 0in 0.4in 0.2in},clip,width = 0.32\linewidth]{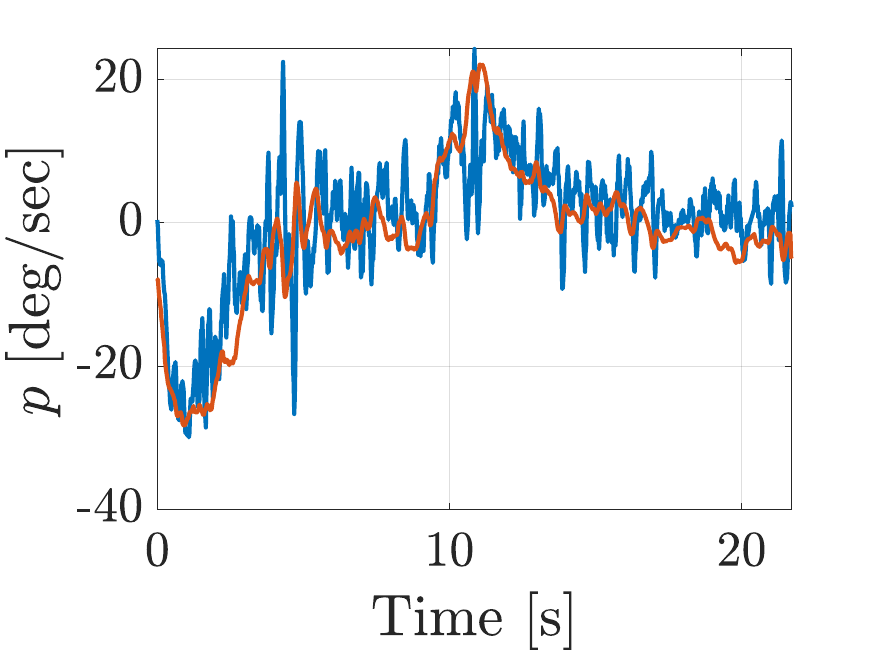}
	\includegraphics[trim={0.in 0in 0.4in 0.2in},clip,width = 0.32\linewidth]{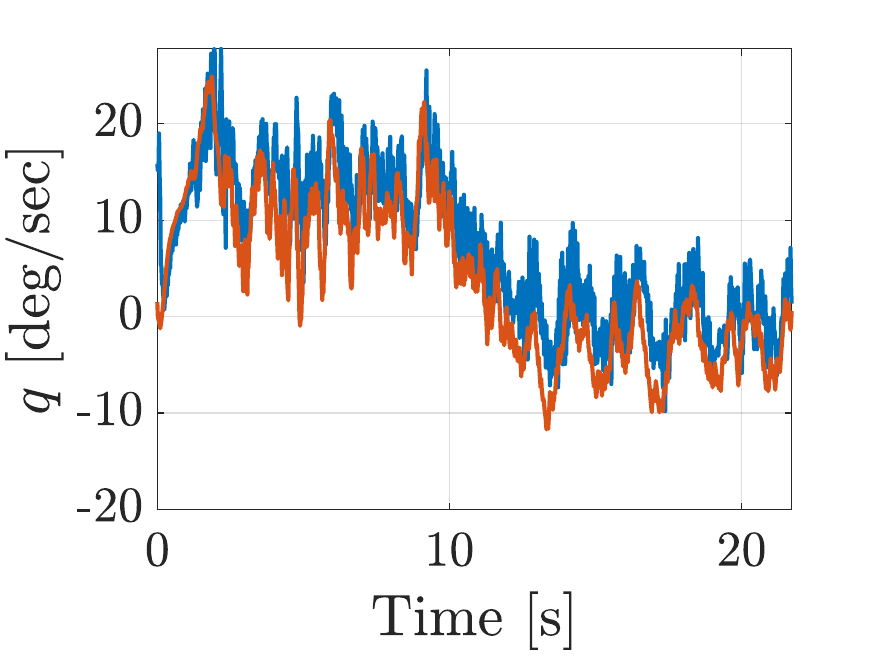}
	\includegraphics[trim={0.in 0in 0.4in 0.2in},clip,width = 0.32\linewidth]{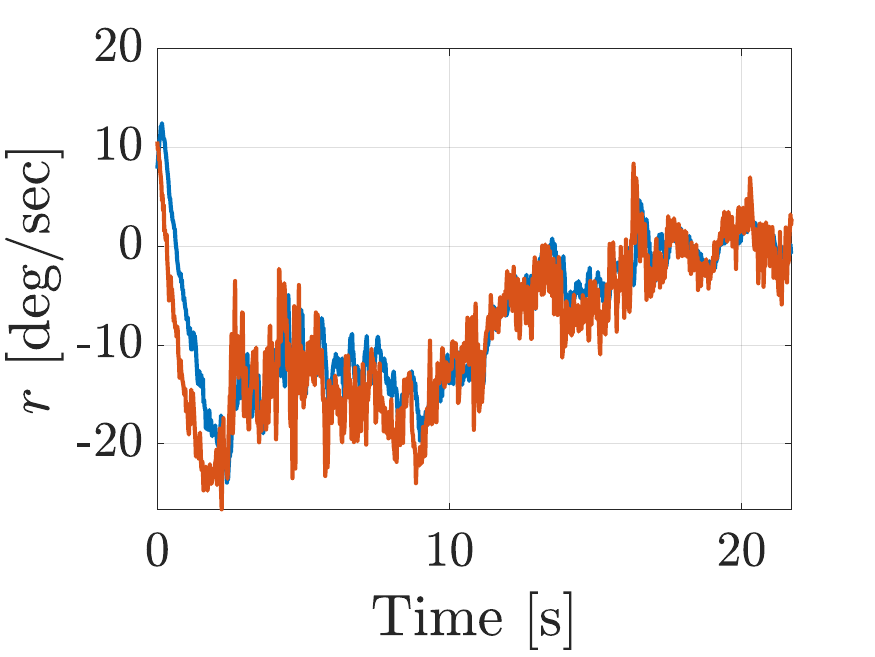}
	\caption{Adaptive NDI Controller - Approach to Stall}
	\label{fig:F3Stall}
\end{figure}

\subsection{Control and learning performance in the presence of a poor initial model}
In the previous flights, the vehicle was manually flown through the learning phase while the PTIs were active.  
This was possible because the vehicle was stable and capable of being flown without the assistance of automatic control.
In the next series of flights, a hidden feedback loop was used to destabilize the vehicle, in the pitch axis for the first two flights and in the roll axis for the third flight.
This feedback loop was enabled at the same time as the learning mode and made the vehicle difficult to pilot manually.  
Thus in these flights, the learning and control occurred simultaneously.  
In addition, the actual (unstable) dynamics were significantly different from the (stable) initial model used by the controller, which is perfect for testing the capability of the L2F system.  

The first flight in this series of experiments used angle-of-attack feedback to the left elevator to produce roughly a -10\% static margin.  
The controller and modeling were not directly aware of this feedback and were blind to what the left elevator was doing; the modeling had no measurement of the left elevator's actions, and the controller could not send commands to the left elevator.  
The flight began with a piloted take off in the nominal configuration (no destabilizing feedback).  
Upon reaching the appropriate test altitude, the autonomous mode was engaged.
This activated the destabilizing feedback, the PTIs, and the autopilot.  
The vehicle was able to stabilize and navigate to the waypoints.
The autonomous mode was disengaged and reengaged several times, resulting in similar behavior.
Using large stick inputs to attempt to maintain straight and level flight, the R/C pilot was also able to fly the destabilized vehicle in pilot command pass-through mode, though moderate angle-of-attack and pitch-attitude excursions were observed.  
The vehicle was returned to its nominal configuration for approach and landing.  

The second flight was similar to the first but with a larger pitch-axis instability, about -16.4\% static margin 
\footnote{A video of one of the flight tests comparing the performance of the R/C pilot and the L2F autopilot for the destabilized vehicle is available at \url{https://youtu.be/y6O1mwzHdOE}}.
Once again, the nominal vehicle was taken up to a certain altitude before engaging the destabilizing feedback, PTIs, and autopilot.  
A couple of pitch oscillations occurred, but the vehicle recovered and navigated appropriately.  
Repeats produced similar results.
Unlike in the first flight, when the R/C pilot flew the destabilized aircraft, he was barely able to control it, even with full or almost-full stick deflections.  
Approach and landing were once again performed in the nominal configuration.

\begin{figure}[htbp]
	\centering
	\includegraphics[trim={0.in 0in 0.4in 0.2in},clip,width = 0.32\linewidth]{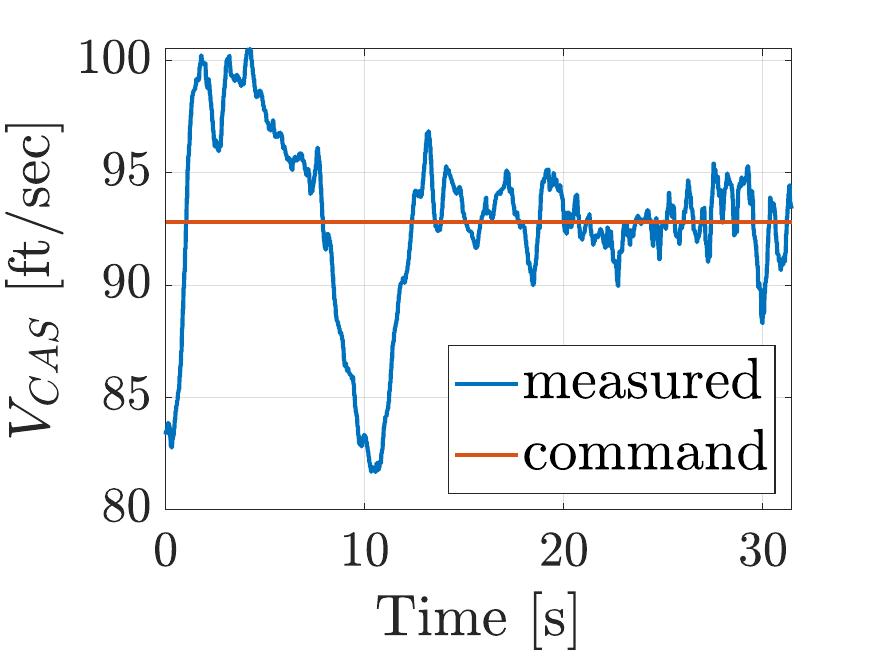}
	\includegraphics[trim={0.in 0in 0.4in 0.2in},clip,width = 0.32\linewidth]{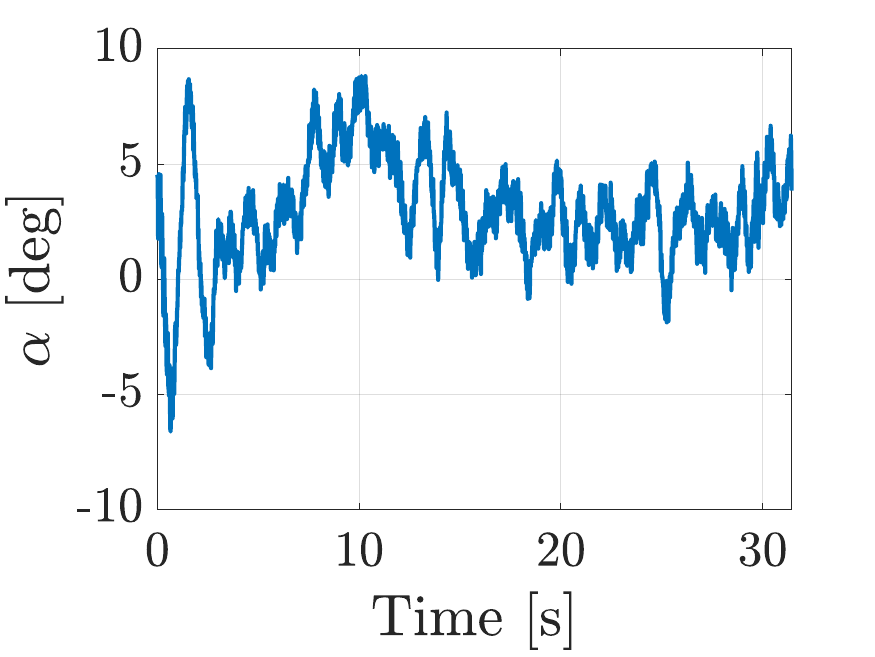}
	\includegraphics[trim={0.in 0in 0.4in 0.2in},clip,width = 0.32\linewidth]{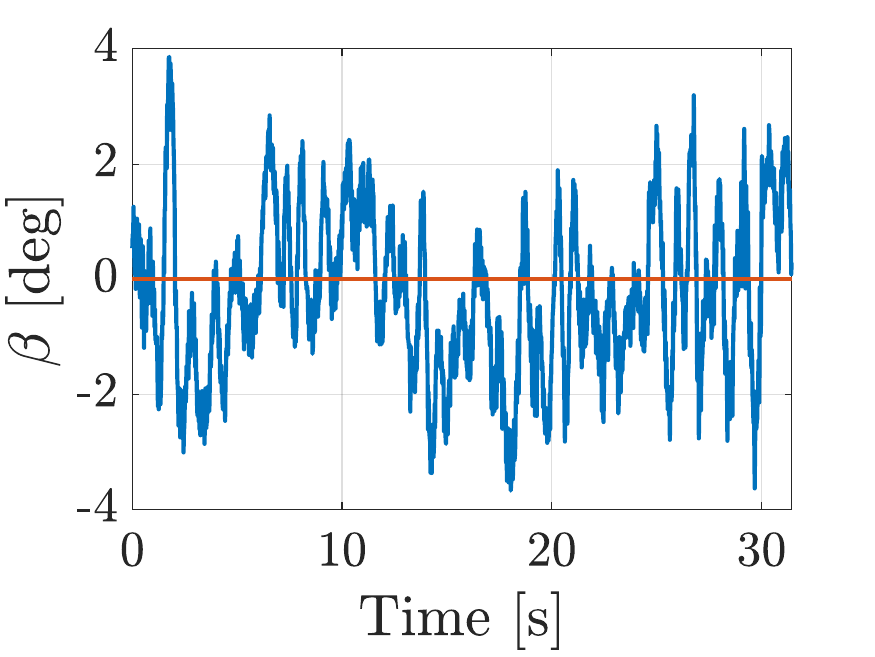}
	\includegraphics[trim={0.in 0in 0.4in 0.2in},clip,width = 0.32\linewidth]{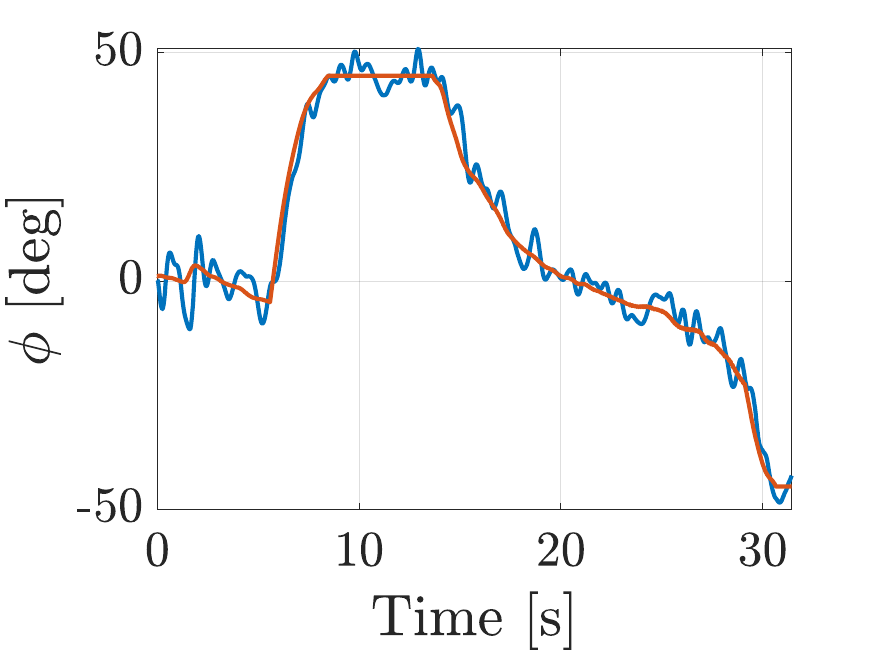}
	\includegraphics[trim={0.in 0in 0.4in 0.2in},clip,width = 0.32\linewidth]{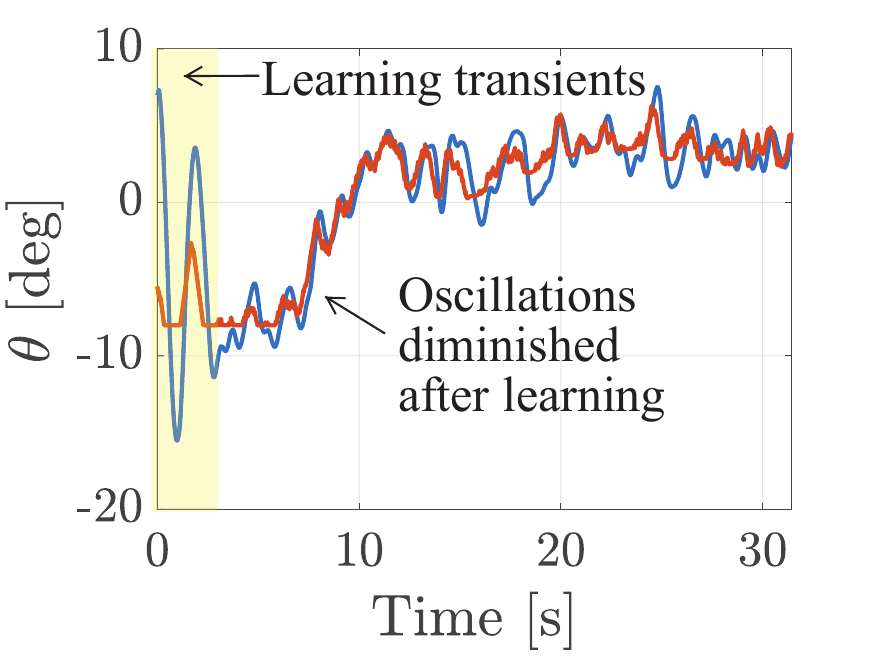}
	\includegraphics[trim={0.in 0in 0.4in 0.2in},clip,width = 0.32\linewidth]{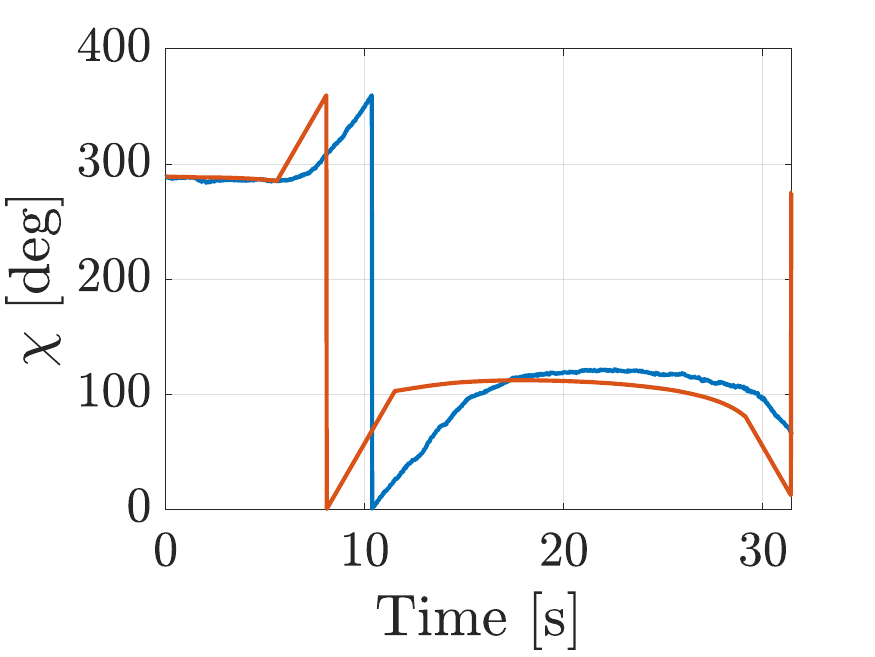}
	\includegraphics[trim={0.in 0in 0.4in 0.2in},clip,width = 0.32\linewidth]{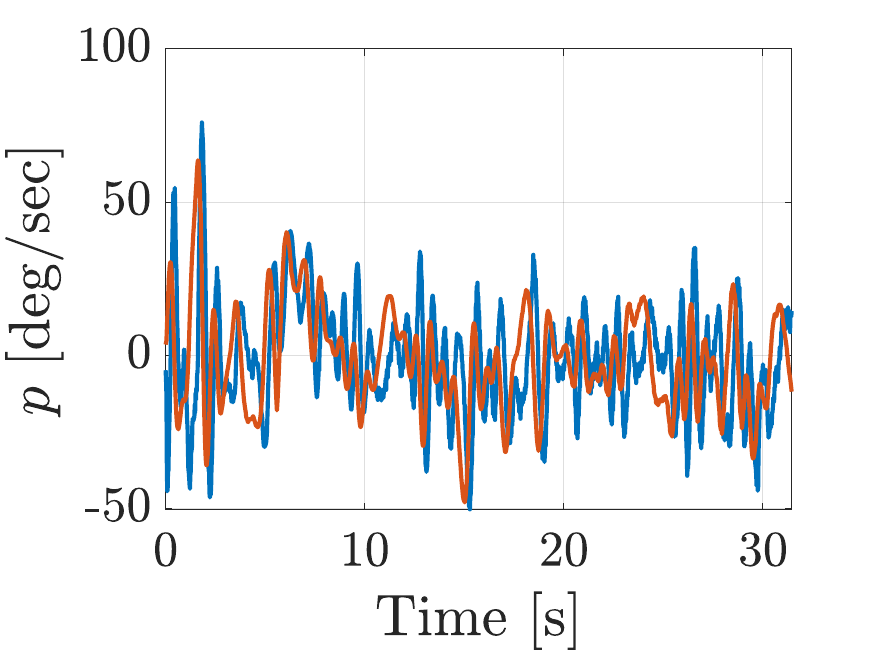}
	\includegraphics[trim={0.in 0in 0.4in 0.2in},clip,width = 0.32\linewidth]{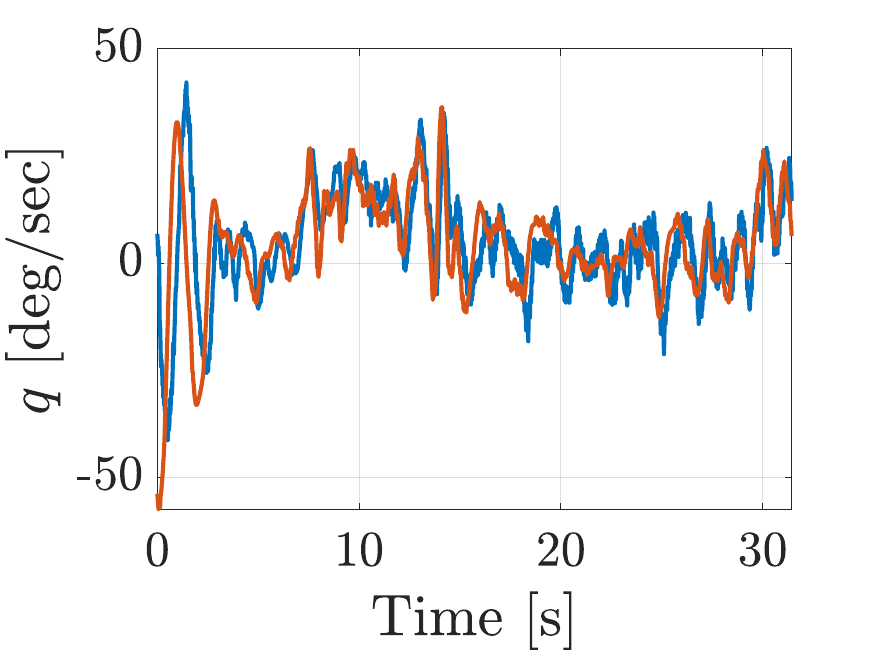}
	\includegraphics[trim={0.in 0in 0.4in 0.2in},clip,width = 0.32\linewidth]{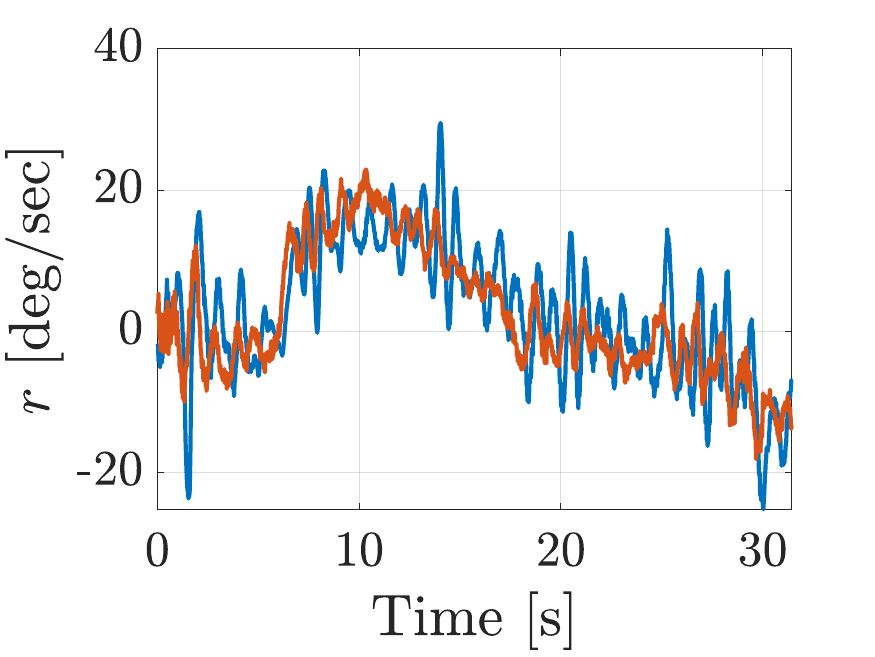}
	\caption{Adaptive NDI Controller - Vehicle Pitch Axis Destabilization}
	\label{fig:F11Angles}
\end{figure}

Figure \ref{fig:F11Angles} shows plots of the autonomous system's tracking performance from the second flight. 
Some transitory errors occurred at the beginning of the flight.
However, after a few seconds, the vehicle settled and began tracking its commands with improved accuracy.  
In Figure \ref{fig:F11uAd}, the adaptive command is plotted along with the pitch angle.  
After the learning transient, the vehicle was able to update its knowledge of the vehicle and the magnitude of the adaptive command reduced.  
Despite the improved system knowledge, uncertainties and disturbances still exist, such as atmospheric disturbances, the PTIs, etc.
Because of this, the adaptive command did not necessarily drop to zero after learning.
\begin{figure}[htbp]
	\centering
	\includegraphics[trim={0in 0in 0in 0.1in},clip,width = 0.40\linewidth]{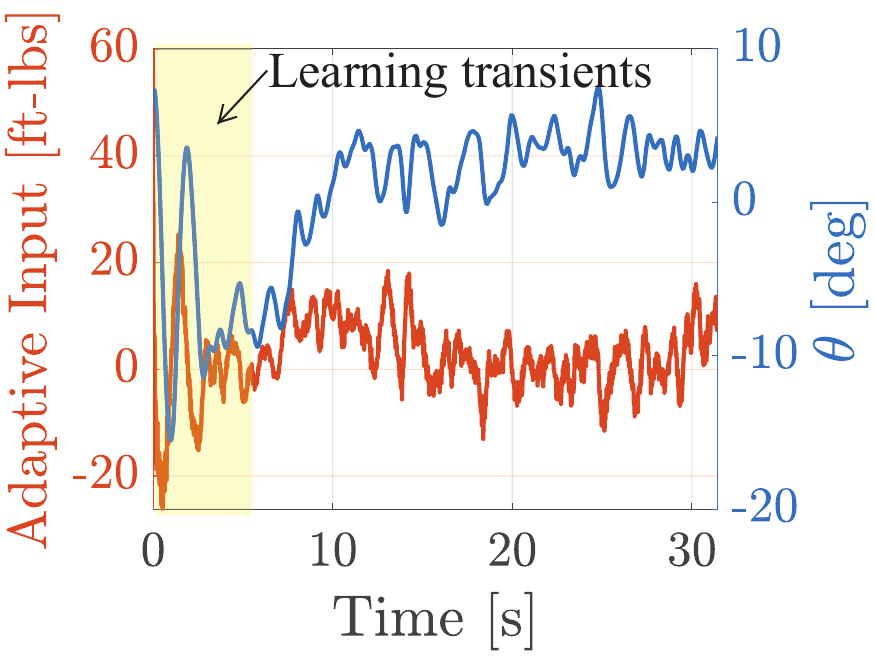}
	\caption{Adaptive Input vs Pitch Angle for Pitch Axis Destabilization Case}
	\label{fig:F11uAd}
\end{figure}

The pilot's flight data (red) and that of the autonomous system (blue) are overlaid in Fig. \ref{fig:F11Pilot}.
When the vehicle was controlled by the pilot, despite large inputs ($\delta_e$) executed by the pilot, large pitch oscillations ($\theta$) still occurred.
In comparison, in the autonomous control mode, the real-time modeling was able to quickly determine that the vehicle was unstable (positive $C_{M_\alpha}$ estimate), and the control law modified the control gain ($K_\alpha$) to compensate.
$K_\alpha$ is used in the construction of the portion of the estimated moment $\hat{M}$ that depends on $\alpha$.
It is important to note that there will always be an inherent delay within the modeling results because sufficient informative data must be collected before a good model can be obtained.  
While real-time modeling was gathering the data, the baseline controller was still operating based on a stable aircraft model (the initial guess) for the first few seconds.
This is where {adaptive compensation is critical for maintaining the stability of the vehicle}.  
Note that due to use of the initial poor model, the actual uncertainties (i.e., the discrepancy between this model and the actual dynamics) most probably included unmatched and/or nonlinear uncertainties that are not considered in Section \ref{sec::L1AC}.
As a result, the tracking performance was not as good as claimed in Theorem \ref{thm:ebnd}.  
For instance, we can see a large pitch excursion in Fig. \ref{fig:F11Pilot} at the beginning where the controller was struggling to achieve good tracking performance.  
Nevertheless, thanks to the adaptive compensation, the vehicle was able to maintain its stability, which is a major safety concern for an aerial vehicle.
Once the system collected sufficient data to learn and update the model, the performance improved, resulting in relatively smooth, level flight.  

\begin{figure}[htbp]
	\centering
	\includegraphics[width = 1.0\linewidth]{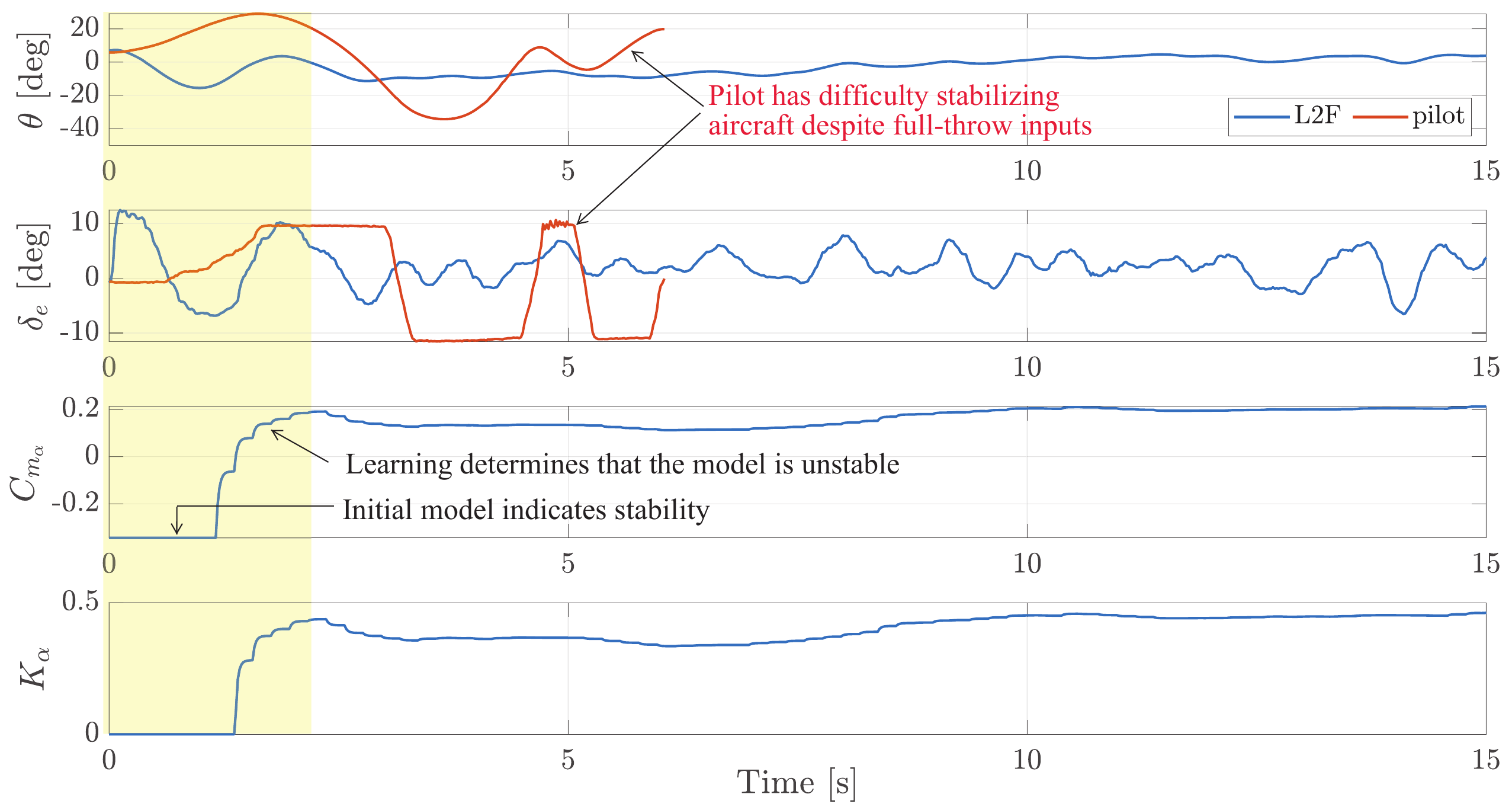}  
	\caption{Overlay of Pilot Data with Autonomous Data}
	\label{fig:F11Pilot}
\end{figure}

The third flight in this series used roll rate feedback to the inboard flaps in order to destabilize the roll axis. 
Initially, the target time to double was $0.5$ seconds, but the feedback gain used was only able to render the vehicle neutrally stable.  
While the target level of instability was not achieved, it was sufficient to degrade the innate stability of the vehicle.  
Similar to the last two flights, the pilot first took off in the nominal configuration and then enabled the destabilizing feedback and PTIs and activated the autopilot mode once the vehicle reached the test altitude.  
Two engagements of the destabilizing feedback produced similar results with a few initial roll oscillations before navigating as expected.
The R/C pilot also made two attempts to fly the destabilized vehicle.  
In the first, the pilot was able to stabilize the vehicle, but with some large roll oscillations.
On the second, the pilot maintained better control of the vehicle but noted that it was difficult to fly.
After reverting to the stable configuration, the pilot landed the vehicle.  

Tracking data for this third flight can be seen in Fig. \ref{fig:F10Angles}.
Like the first two flights, there were large tracking errors at the beginning (though this time in the roll axis),  but the L2F system quickly learned the new dynamics and the oscillations died down momentarily as the control laws were updated with more accurate models.

\begin{figure}[hb]
	\centering
	\includegraphics[trim={0.in 0in 0.4in 0.2in},clip,width = 0.32\linewidth]{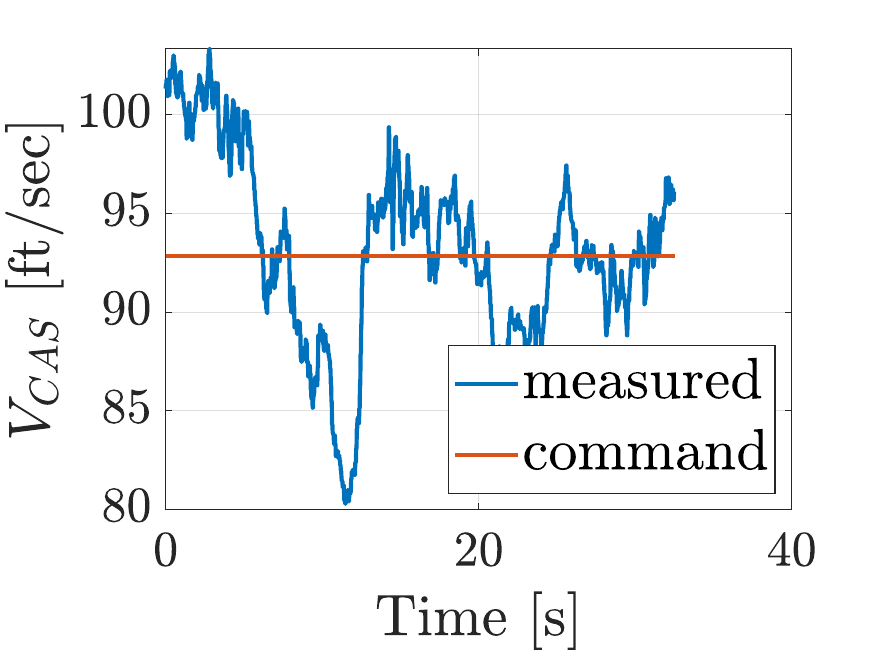}
	\includegraphics[trim={0.in 0in 0.4in 0.2in},clip,width = 0.32\linewidth]{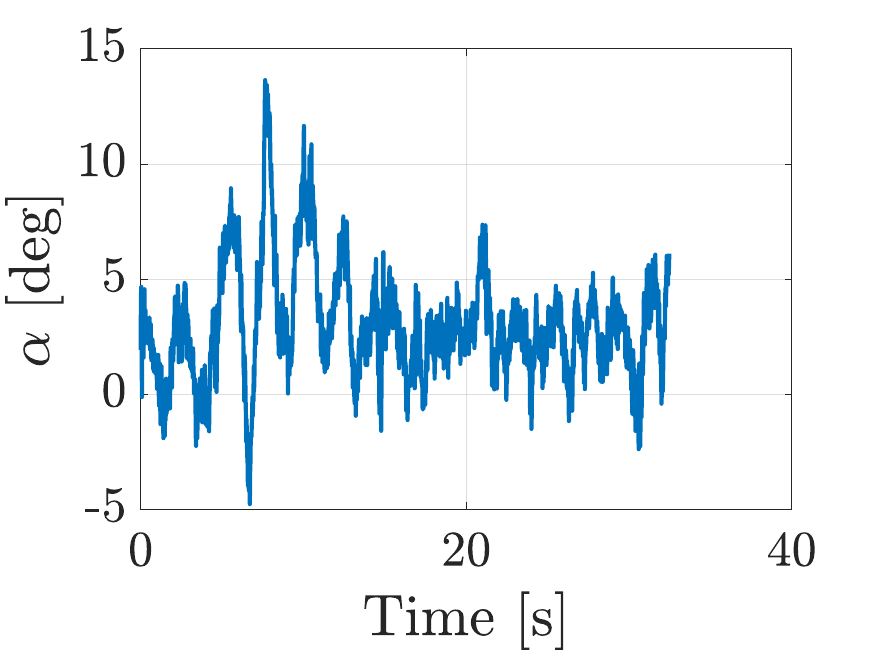}
	\includegraphics[trim={0.in 0in 0.4in 0.2in},clip,width = 0.32\linewidth]{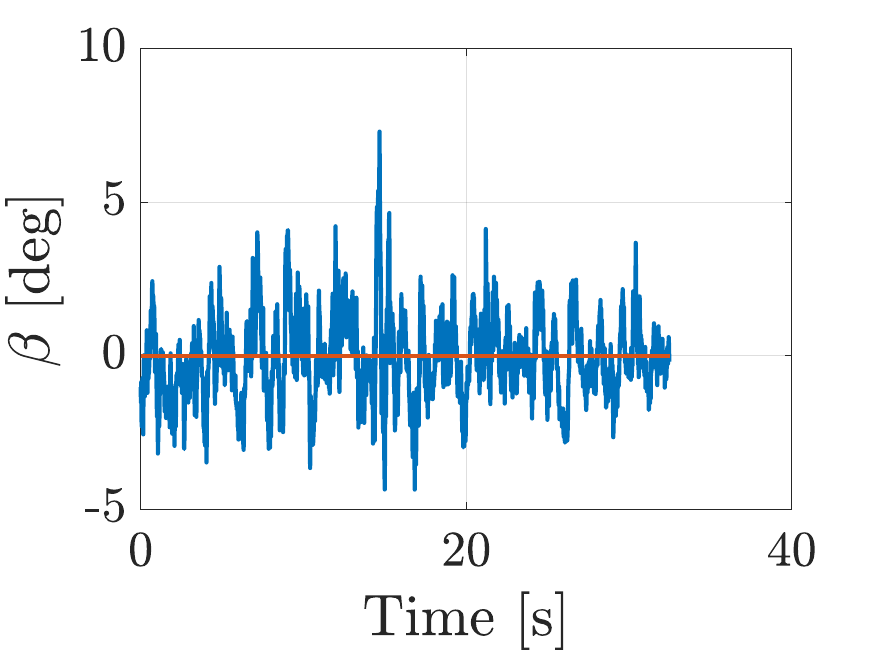}
	\includegraphics[trim={0.in 0in 0.4in 0.2in},clip,width = 0.32\linewidth]{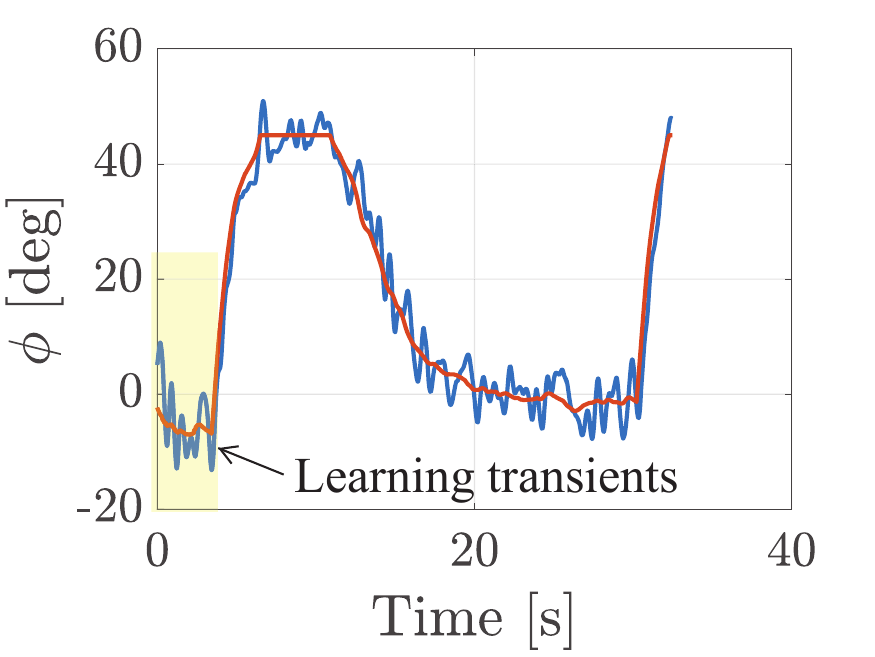}
	\includegraphics[trim={0.in 0in 0.4in 0.2in},clip,width = 0.32\linewidth]{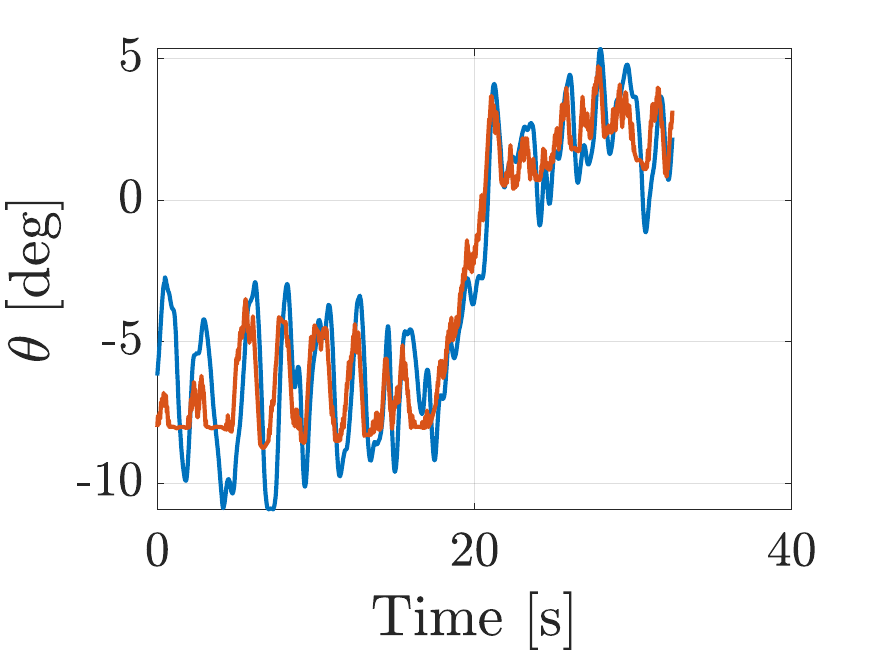}
	\includegraphics[trim={0.in 0in 0.4in 0.2in},clip,width = 0.32\linewidth]{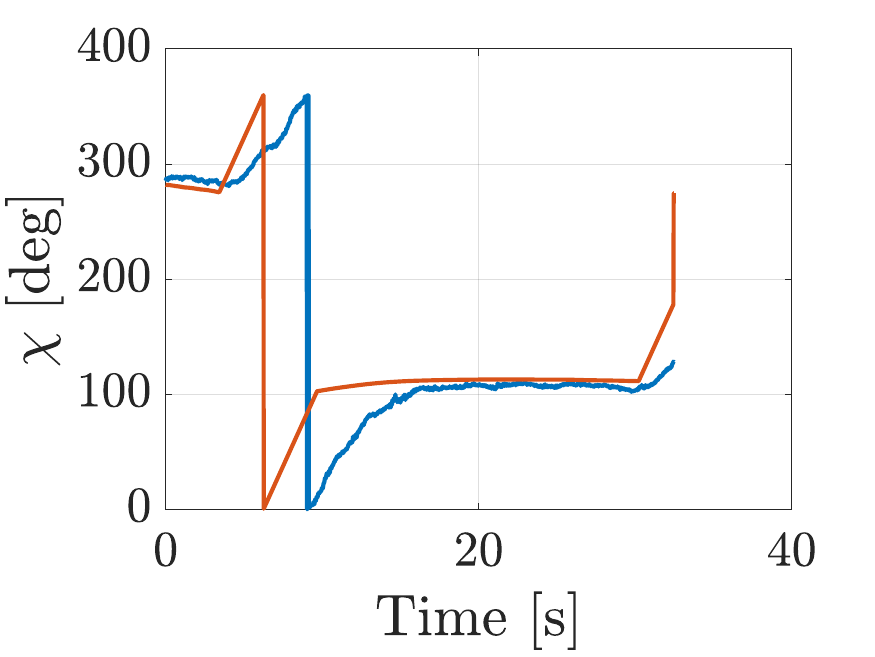}
	\includegraphics[trim={0.in 0in 0.4in 0.2in},clip,width = 0.32\linewidth]{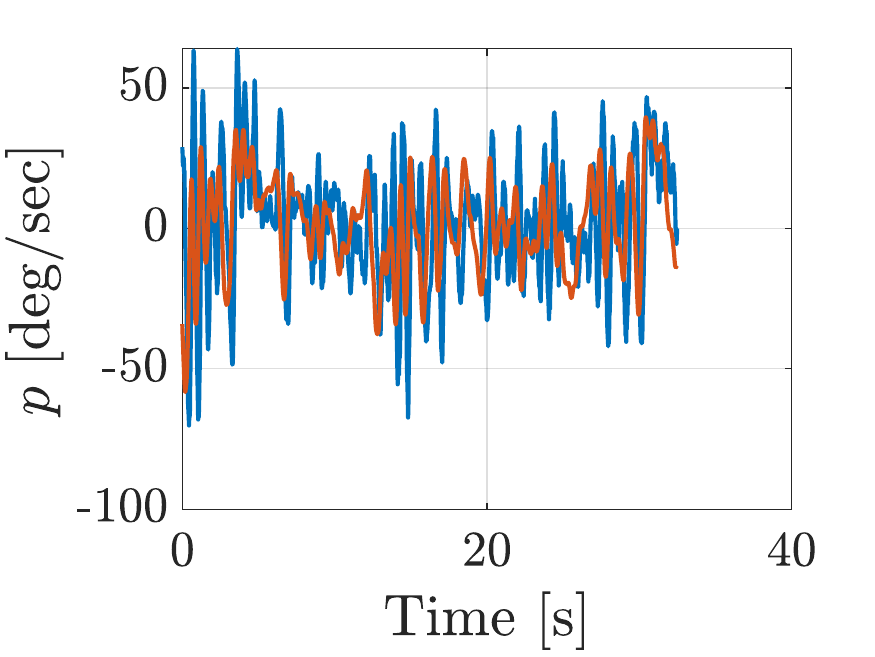}
	\includegraphics[trim={0.in 0in 0.4in 0.2in},clip,width = 0.32\linewidth]{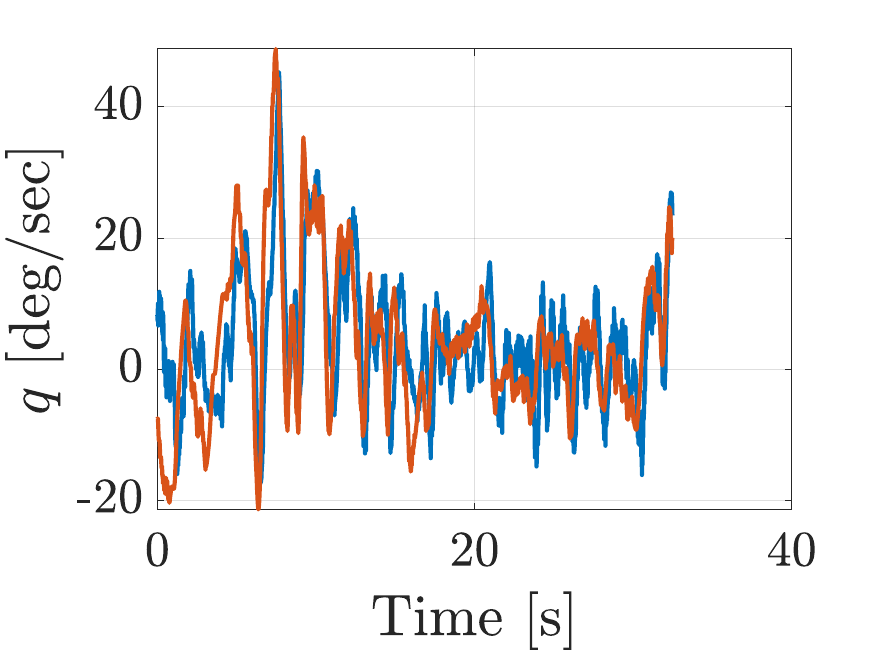}
	\includegraphics[trim={0.in 0in 0.4in 0.2in},clip,width = 0.32\linewidth]{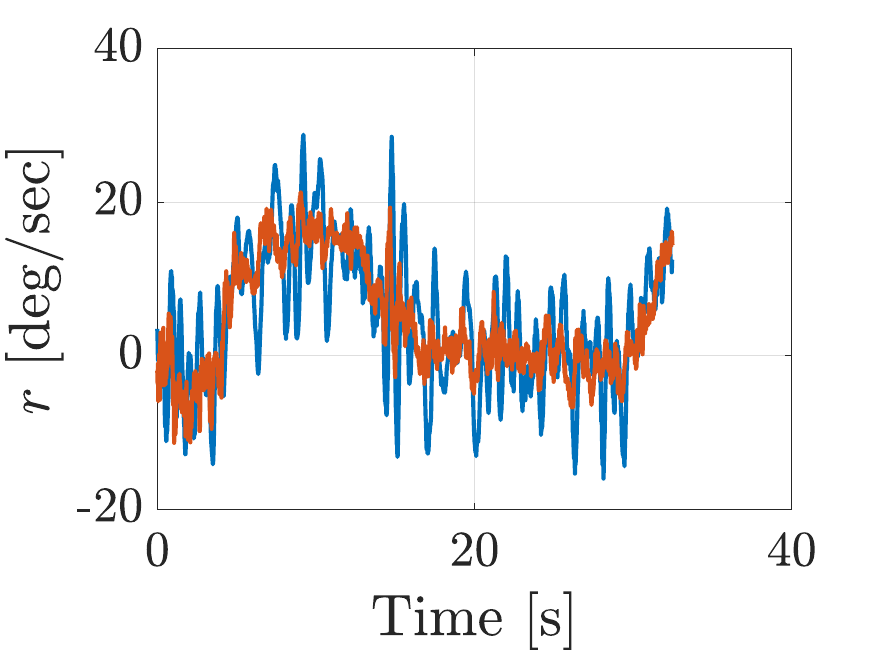}
	\caption{Adaptive NDI Controller - Vehicle Roll Axis Destabilization}
	\label{fig:F10Angles}
\end{figure}

\section{Conclusions}
This paper presents an \lonew adaptive control (\loneAC) based scheme to ensure safe learning of aerial vehicle dynamics within the Learn-to-Fly (L2F) framework, which is envisioned to replace the traditional iterative development paradigm for aerial vehicles. At the core of the proposed scheme is an \loneAC ~architecture that provides stability and transient performance guarantees for a switched linear reference system subject to unknown time-varying parameters and disturbances. The paper also includes the controller design for the L2F system and how the \loneAC ~law is designed and integrated into the L2F system. The proposed scheme was validated by flight tests on an unmanned aerial vehicle. 

From the flight tests, we can see that the benefit of using \lonew adaptive control within the L2F paradigm is that we can leverage the robustness of the \lonew controller to maintain the stability of the vehicle during the learning of its dynamics.  
At least initially, the uncertainties within the vehicle model are large.  
Whatever controller is used, it must be able to tolerate these large uncertainties to provide enough time for the dynamics to be learned.   
When a poor initial model is used, large tracking errors can occur even with \lonew compensation (see Figure \ref{fig:F11Angles}).
Without that robust compensation (which was not tested due to its high-risk nature), the vehicle's loss of stability could have occurred.
If we chose a fixed robust controller, then we would not take full advantage of the additional information that is being learned throughout the flight.  
By using an \lonew adaptive controller, we have the robustness to enable learning, and by updating (the desired dynamics of) the \lonew adaptive controller, we can make use of the learned model to improve the control performance.

Our future work includes extension of the \loneAC ~architecture for switched systems subject to more general uncertainties (e.g., state-dependent and unmatched \cite{zhao2020RALPV} uncertainties).

\section*{Acknowledgments}
This work is funded in part by the Air Force Office of Scientific Research (AFOSR) under the grant FA9550-18-1-0269, and in part by the Langley Research Center of the National Aeronautics and Space Administration (NASA) under the grant 80NSSC17M0051. 

\bibliography{L2FL1}
\end{document}